\documentclass[11pt]{article}
\usepackage[a4paper,top=3cm,bottom=3.5cm,left=2.5cm,right=2.5cm,marginparwidth=1.75cm]{geometry}
\usepackage[T1]{fontenc}
\usepackage{graphicx}
\usepackage[title]{appendix}
\usepackage{tikz}
\usepackage{pgfplots}
\usepackage{cutwin}

\usepackage{amsmath,yhmath}
\usepackage{amsthm}
\usepackage{amssymb}
\usepackage{hyperref}
\usepackage{cleveref}
\usepackage{epstopdf}
\usepackage{comment}
\usepackage{todonotes}
\usepackage{color}
\usepackage{float}
\usepackage[sort&compress,numbers]{natbib}
\usetikzlibrary{shapes.geometric}
\usetikzlibrary{arrows,automata,quotes}
\usepackage{iftex}
\usepackage{xfrac} 
\usepackage{amsmath,yhmath}
\usepackage{amsthm}
\usepackage{amssymb}
\usepackage{mathtools}
\usepackage{nccmath}
\usepackage{hyperref}
\usepackage{caption, subcaption}
\usepackage[title]{appendix}
\usepackage{cleveref}
\usepackage{epstopdf}
\usepackage{comment}
\usepackage{color}
\usepackage{float}
\usepackage{cutwin}
\captionsetup{font={small,it}}
\usepackage[sort&compress,numbers]{natbib}
\usepackage{makecell}
\usepackage{array}   
\usepackage{multirow}
\usepackage{graphicx}
\usepackage{caption}
\usepackage{subcaption}
\newtheorem{thm}{Theorem}[section]
\newtheorem{cor}[thm]{Corollary}
\newtheorem{lemma}[thm]{Lemma}
\newtheorem{prop}[thm]{Proposition}
\newtheorem{ass}[thm]{Assumption}
\theoremstyle{definition}

\newtheorem{rem}[thm]{Remark}

\renewcommand{\Re}{\mathrm{Re}}
\newcommand{\inc}{\mathrm{in}}
\newcommand{\s}{\mathrm{sc}}
\newcommand{\p}{\partial}
\newcommand{\ds}{\displaystyle}

\newcommand{\Z}{\mathbb{Z}}

\numberwithin{equation}{section}
\numberwithin{figure}{section}
\title{Effective Medium Theory for Time-modulated Subwavelength Resonators}
\author{Habib Ammari\thanks{\footnotesize Department of Mathematics, ETH Z\"urich, R\"amistrasse 101, CH-8092 Z\"urich, Switzerland (habib.ammari@math.ethz.ch, liora.rueff@sam.math.ethz.ch).}\and Jinghao Cao\thanks{\footnotesize Computing and Mathematical Sciences, California Institute of Technology, Pasadena, CA 91125, USA (jinghao.cao@caltech.edu).} \and Erik Orvehed Hiltunen\thanks{\footnotesize Department of Mathematics, University of Oslo, Moltke Moes vei 35, 0851 Oslo, Norway (erikhilt@math.uio.no).} \and Liora Rueff\footnotemark[1] }
\date{}

\begin{document}
\maketitle
\begin{abstract}
    This paper provides a general framework for deriving effective material properties of one-dimensional, time-modulated systems of subwavelength resonators. It applies to subwavelength resonator systems with a general form of time-dependent parameters. We show that the resonators can be accurately described by a point-scattering formulation when the width of the resonators is small. In contrast to the static setting, where this point interaction approximation yields a Lippmann-Schwinger equation for the effective material properties, the mode coupling in the time-modulated case instead yields an infinite linear system of Lippmann-Schwinger-type equations. The effective equations can equivalently be written as a system of differential equations. Moreover, we introduce a numerical scheme to approximately solve the system of coupled equations and illustrate the validity of the effective equation.
\end{abstract}
\noindent{\textbf{Mathematics Subject Classification (MSC2000):} 35J05, 35C20, 35P20, 74J20}
\vspace{0.2cm}\\
\noindent{\textbf{Keywords:}} time modulation, subwavelength resonator, effective medium theory, transfer and scattering matrices, system of Lippmann-Schwinger equations

\section{Introduction}

This paper is devoted to deriving the homogenised equation of one-dimensional, time-dependent metamaterials. The theory of homogenisation is central to a plethora of physical and engineering applications \cite{milton}. Homogenisation provides a way to transition from the detailed, small-scale description to an effective, large-scale description, simplifying analysis and computation while preserving essential features, and is a cornerstone method in the theory of metamaterials \cite{choy2015effective}. With significant attention devoted to metamaterials that are periodically driven in time, resulting in time-modulated material parameters, a natural problem is to develop homogenisation theories for time-dependent problems. We specifically choose to study one-dimensional materials since they allow a more detailed exploration of the effect of time-modulations. Several results have been established in the one-dimensional, time-dependent setting, notably the capacitance matrix approximation \cite{jinghao_liora} and the rigorous definition of the scattered wave field for a given incident wave field \cite{ammari2024scattering}.

This paper provides a general framework for the effective medium theory of time-modulated subwavelength metamaterials. The propagation of waves through these media is governed by the same equations as posed in \cite{ammari2024scattering}. Specifically, we assume that one of the material parameters $\kappa$ varies in time inside the resonators, while $\rho$ is a static material parameter (see Section \ref{sec:problem_setting} for a definition of the setting). We opt to only consider materials with time-dependent $\kappa$ and static $\rho$ since previous work has shown that modulating $\rho$ in time does not affect wave propagation at leading order \cite{jinghao_liora}. Our theory applies to subwavelength resonator systems where the modulation frequency is of the same order as the subwavelength quasifrequencies and the operating frequency (\textit{i.e.}, the frequency of the incident wave) is in the low-frequency regime. It also generalises to higher space dimensions, as outlined in Remark \ref{rem:higherdim}. 

In contrast to this paper, previous works have successfully established an effective medium theory for the low-frequency regime, in which the operating frequency is significantly smaller than the resonant frequency and the material parameters are static in time \cite{Caflisch_Miksis_Papanicolaou_Ting_1985,Caflisch_Miksis_Papanicolaou_Ting_1985_waveprop,Kargl2002}. Furthermore, it is worth emphasising that the time-modulations considered in this paper are very different from the travelling wave form modulations discussed in \cite{liu,rizza,pendry1,touboul2024high} and in the references therein, where the mathematical analysis is simplified by passing to a moving coordinate frame. Opposed to the time-modulation in travelling wave form, our setting leads to the coupling of different frequency harmonics and is characterised by a system of coupled differential equations. By exploiting the subwavelength resonance of the building blocks, such systems may exhibit subwavelength resonant quasifrequencies \cite{ammari2024scattering}, spatiotemporally localised modes \cite{liora2024st_localisation}, $k$-gaps and unidirectional band gaps \cite{jinghao1}. Moreover, their mathematical treatment is more involved. In the static setting, numerous papers have established an effective medium theory for subwavelength resonator systems based on the point interaction approximation \cite{pointapp1,pointapp2} resulting in the Lippmann-Schwinger equation \cite{Hai_Habib,florian23,florian_MonteCarlo,Ammari_Challa,Foldy_1945}. In contrast to the static setting, we are faced with a coupled problem in the time-modulated case, which yields an infinite linear system of Lippmann-Schwinger-type equations instead. 
Moreover, implementing the point interaction approximation under the assumption of time-dependent material parameters is highly non-trivial, as there are two subwavelength quasifrequencies for a single time-modulated resonator. We shall see that, under suitable assumptions on volume fraction, configuration and incident frequency, we can adapt the results valid in the static case \cite{Hai_Habib} to hold true in the time-modulated case.

In this paper, we follow an approach that involves the scattering matrix, which we shall define from first principles for both static and time-dependent metamaterials \cite{Lin_2022}. We show that the scattering matrix has a characteristic structure in the leading-order terms as the resonators become small, which furnishes the point interaction approximation. We recall that the idea of a point interaction approximation goes back
to Foldy’s paper \cite{Foldy_1945}. It is a natural tool to analyse a variety of interesting problems in the continuum limit. Taking the continuum limit allows us to derive a theory valid for infinitely many resonators, such as in \cite{Caflisch_Miksis_Papanicolaou_Ting_1985} and \cite{pointapp1}, since it acts as a tool to average over an inhomogeneous material.

Our point-scattering formulation is highly relevant in its own, as the scattering matrix describes the coupling of different frequency harmonics and quantifies the frequency conversion due to the time-modulated parameters. Moreover, by taking the continuum limit, we derive the homogenised governing ordinary differential equation, which models the effective macroscopic behaviour while averaging out the fine-scale variations. This allows the introduction of a time-dependent, one-dimensional effective medium theory, \textit{i.e.}, a theoretical framework to describe the macroscopic properties of heterogeneous time-dependent, one-dimensional metamaterials in terms of their microscopic structure. The assumption of time-modulated material parameters in the derivation of an effective medium theory marks a new milestone in the mathematical exploration of metamaterials, where classical wave-scattering results are generalised to the time-modulated setting. It is worth emphasising that while our purpose in \cite{jinghao_liora, ammari2024scattering} was the study of wave scattering from systems of time-modulated systems of subwavelength resonators, our aim in the present paper is to derive effective models for computing the scattered fields by large systems of time-modulated resonators in the limit where the number of resonators goes to infinity and their typical size goes to zero, keeping the volume fraction of the resonators constant.
Such non-classical models, which capture the average macroscopic behaviour of the large systems of resonators, can be used to simplify the analysis of time-modulated metamaterials and make the computations of their scattering properties more feasible and stable. Note also that because of the subwavelength resonant phenomena, the effective models both in the static case and in the time-modulated case depend on the operating frequency. Moreover, the effective models are not valid when the operating frequency coincides with the real part of one of the resonant frequencies \cite{Hai_Habib}. In such situations, the effect of each resonator on the total scattered wave is of order one, and hence the limit does not exist as the number of resonators goes to infinity. The main difference between the static case and the time-modulated case comes from the fact that time-modulation gives rise to a family of coupled harmonics. This leads to an effective model described by a system of coupled differential equations instead of a Helmholtz equation with an effective (frequency dependent) potential as in the static case.

Our paper is organised as follows. We start by providing an overview of the mathematical and physical setting of the problem considered herein in Section \ref{sec:problem_setting}. In Section \ref{sec:static_material} we focus on one-dimensional metamaterials with static properties and derive the homogenised equation by introducing a scattering matrix and exploiting its structure. Although our main focus is on the time-modulated case, the static metamaterials serve as a concise introduction to our methods and ideas. This method is used to obtain an effective medium theory in Section \ref{sec:timedep_material} for time-modulated metamaterials. To achieve this, we explicitly compute the scattering matrix as the resonators become smaller. Finally, we solve the effective equation numerically in Section \ref{sec:numerical_results} with a uniquely tailored numerical scheme. We summarise our results and conclusions in Section \ref{sec:conclusion}.

\section{Problem Setting}\label{sec:problem_setting}
We consider a finite medium $\mathcal{U}\subset\mathbb{R}$, which contains $N$ disjoint high-contrast subwavelength resonators $\left(D_i\right)_{i=1,\dots,N}$. Each resonator is defined as an open interval $D_i:=\left(x_i^-,x_i^+\right)$ of length $\ell_i:=x_i^+-x_i^-$ with a separation distance $\ell_{i(i+1)}:=x_{i+1}^--x_i^+$, for all $i=1,\dots,N$. We denote the centre of each resonator by $z_i=(x_i^-+x_i^+)/2$. In the remainder of this paper we shall assume that each resonator is of length $\ell$ and that they are evenly spaced. We denote the disjoint union of all resonators $D_i$ by $D$. We refer to Figure \ref{fig:setup} for an illustration of the geometrical setup of the material.
\begin{figure}[H]
		\centering
		\includegraphics[width=0.5\textwidth]{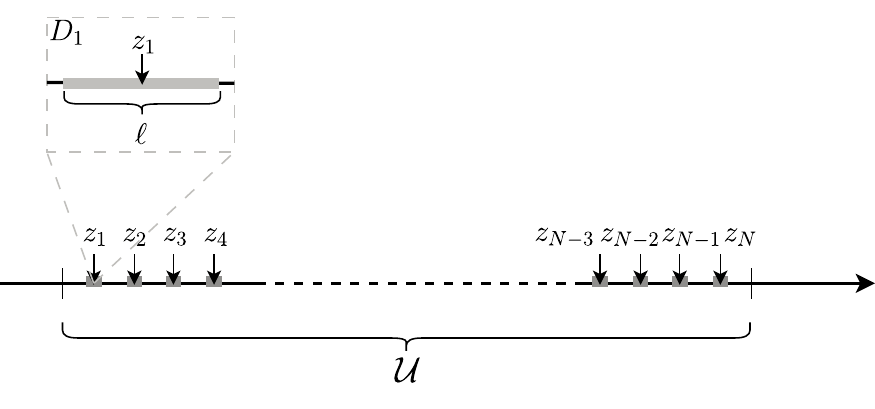}
		\caption{An illustration of the domain $\mathcal{U}$ with $N$ resonators.}
		\label{fig:setup}
	\end{figure}
We denote the material parameters inside the resonator $D_i$ by $\kappa_{\mathrm{r}}\kappa_i(t)$ and $\rho_{\mathrm{r}}$. We write:
\begin{align}
    \rho(x)=\begin{cases} 
    \rho_0, & x \notin {D}, \\
    \rho_{\mathrm{r}}, & x \in D_i,
    \end{cases} \qquad \kappa(x,t)=\begin{cases} 
    \kappa_0, & x \notin {D}, \\
    \kappa_{\mathrm{r}} \kappa_i(t), & x \in D_i.
    \end{cases}
\end{align}
For our numerical simulations, we shall consider
\begin{equation}
    \kappa_i(t):=\frac{1}{1+\varepsilon_{\kappa,i}\cos\left(\Omega t+\phi_{\kappa,i}\right)}, \label{eq:rho_kappa}
\end{equation}
for all $1\leq i\leq N$, where $\Omega$ is the frequency of the time-modulations, $\varepsilon_{\kappa,i} \in [0,1)$ are their amplitudes, and $\phi_{\kappa,i} \in [0,2\pi)$ are the phase shifts.\par 

 Identically as in \cite{ammari2024scattering}, we introduce the contrast parameter and the wave speeds
\begin{equation} \label{defdelta}
    \delta:=\frac{\rho_{\mathrm{{r}}}}{\rho_0},\quad v_0:=\sqrt{\frac{\kappa_0}{\rho_0}},\quad v_{\mathrm{r}}:=\sqrt{\frac{\kappa_\mathrm{r}}{\rho_{\mathrm{r}}}},
\end{equation}
respectively. 

We consider an incident wave field $u^{\inc}(x,t)$ with a (real) frequency $\omega$ and such that 
\begin{align}\label{eq:PDE_uin}
    \frac{\partial^2}{\partial t^2} u^{\inc}(x,t)-v_0^2\Delta u^{\inc}(x,t) =0.
\end{align} 
We let the total wave field $u(x,t)$ be given by
\begin{align}
    u(x,t)=\sum\limits_{n=-\infty}^{\infty}v_n(x)\mathrm{e}^{-\mathrm{i}(\omega+n\Omega )t},
\end{align}
which is furnished by the assumption that $u(x,t) \mathrm{e}^{\mathrm{i} \omega t}$ is periodic with respect to $t$ with period $T$, where $\omega$ is the frequency of the incident wave. We note that the total wave field $u$ consists of the scattered wave field and the incident wave field, in particular,
\begin{align}
    u(x,t)=\begin{cases}
        u^{\mathrm{sc}}(x,t)+u^{\mathrm{in}}(x,t),&x\notin D,\\
        u^{\mathrm{sc}}(x,t),&x\in D.
    \end{cases}
\end{align}

Assuming that $u^{\mathrm{sc}}(x,t) \mathrm{e}^{\mathrm{i} \omega t}$ and $u^{\mathrm{in}}(x,t) \mathrm{e}^{\mathrm{i} \omega t}$ are periodic with respect to $t$ with period $T:=2\pi/\Omega$, we write their Fourier expansions with respect to $t$ as follows:  
\begin{equation}\label{def:us_Fourier}
    u^{\mathrm{sc}}(x,t)=\sum\limits_{n=-\infty}^{\infty}v^{\mathrm{sc}}_n(x)\mathrm{e}^{-\mathrm{i}(\omega+n\Omega )t},\quad u^{\mathrm{in}}(x,t)=\sum\limits_{n=-\infty}^{\infty}v^{\mathrm{in}}_n(x)\mathrm{e}^{-\mathrm{i}(\omega+n\Omega )t},\quad\forall\,x\in\mathbb{R},\,t\geq0.
\end{equation}

The governing equations for the wave scattering by the collection of resonators $D$ are given by \cite{jinghao_liora,erik_JCP}:
\begin{equation}\label{eq:dD_system_u} 
	\left\{
	\begin{array} {ll}
	\ds \frac{\p^2}{\p t^2}u^\s(x,t) - v_0^2\Delta u^\s(x,t) = 0, \quad &x\notin D, \\[1em]
	\ds \frac{\p }{\p t } \frac{1}{\kappa_i(t)} \frac{\p}{\p t}u^\s(x,t) - v_{\mathrm{r}}^2 \Delta u^\s(x,t) = 0, \quad &x\in D_i, \quad i=1,\ldots,N,\\[1em]
	\ds u^\s|_-(x_i^-,t) - u^\s|_+(x_i^-,t) = u^\inc|_-(x_i^-,t), \quad &i=1,\dots,N, \\[0.5em] 	
    \ds u^\s|_-(x_i^+,t) - u^\s|_+(x_i^+,t) = -u^{\inc}|_+(x_i^+,t), \quad &i=1,\dots,N, \\[0.5em] 
	\ds \frac{\p u^\s}{\p x }\bigg|_{+}(x_i^-,t) - \delta\frac{\p u^\s}{\p x }\bigg|_{-}(x_i^-,t) = \delta \frac{\p u^\inc}{\p x }\bigg|_{-}(x_i^-,t), &i=1,\dots,N, \\[1em]
	\ds  \frac{\p u^\s}{\p x }\bigg|_{-}(x_i^+,t) - \delta\frac{\p u^\s}{\p x }\bigg|_{+}(x_i^+,t) = \delta \frac{\p u^\inc}{\p x }\bigg|_{+}(x_i^+,t), &i=1,\dots,N, \\ [1em]
	u^\s \text{ is an outgoing wave,}
\end{array}		
\right.
\end{equation}
where we use the notation
\begin{align}
    \left.w\right|_{\pm}(x):=\lim_{s\rightarrow0,\,s>0}w(x\pm s).
\end{align}
\par 

We define the  resonant quasifrequencies as the set of frequencies $\omega_i$ with real parts in the first Brillouin zone $[-\Omega/2,\Omega/2)$ such that there is a non-zero (Bloch) solution $u_i(x,t)$ to \eqref{eq:dD_system_u} with $u^\textrm{in}(x,t) = 0$ and such that $u_i(x,t) \mathrm{e}^{\mathrm{i} \omega_i t}$ is periodic with respect to $t$ with period $T$. Such $\omega_i$ is a subwavelength resonant quasifrequency if the corresponding eigenmode is essentially supported in the subwavelength frequency regime, \textit{i.e.}, the infinite sum in \eqref{def:us_Fourier} can be approximated by a finite one with a number of terms $K \ll 1/\sqrt{\delta}$,  as $\delta \rightarrow 0$; see \cite{jinghao_liora,erik_JCP}. 

A practical characterisation of these resonant quasifrequencies can be obtained 
by decomposing $u^{\mathrm{sc}}$ into its Fourier modes. In fact, the governing equations posed on the modes $v_n^{\mathrm{sc}}(x)$ and $v_n^{\mathrm{in}}(x)$ of $u^{\mathrm{sc}}(x,t)$ and $u^{\mathrm{in}}(x,t)$ defined in \eqref{def:us_Fourier} can be derived to satisfy the following system of equations \cite{ammari2024scattering}:
\begin{equation}\label{eq:1DL_system}
	\left\{
	\begin{array} {ll}
	\ds  \frac{\mathrm{d}^2}{\mathrm{d}x^2}v^{\mathrm{sc}}_n+\frac{\rho_0(\omega+n\Omega)^2}{\kappa_0}v^{\mathrm{sc}}_n=0, \quad &x\notin D, \\[1em]
	\ds \frac{\mathrm{d}^2}{\mathrm{d}x^2}v_{n}^{\mathrm{sc}}+\frac{\rho_{\mathrm{r}}(\omega+n\Omega)^2}{\kappa_{\mathrm{r}}}v_{i,n}^{**}=0, \quad &x\in D_i, \quad i=1,\ldots,N,\\[1em]
	\ds \left.v^{\mathrm{sc}}_n\right|_{-}\left(x_i^-\right) -\left.v^{\mathrm{sc}}_n\right|_{+}\left(x_i^-\right)=v_n^{\inc}|_-(x_i^-), \quad &\forall\,i=1,\dots,N, \\[0.5em]
    \ds \left.v^{\mathrm{sc}}_n\right|_{-}\left(x_i^+\right) -\left.v^{\mathrm{sc}}_n\right|_{+}\left(x_i^+\right)=-v_n^{\inc}|_+(x_i^+), \quad &\forall\,i=1,\dots,N, \\[0.5em] 	
	\ds \left.\frac{\mathrm{d} v_{n}^{\mathrm{sc}}}{\mathrm{~d} x}\right|_{+}\left(x_i^{-}\right)-\left.\delta \frac{\mathrm{d} v^{\mathrm{sc}}_n}{\mathrm{d} x}\right|_{-}\left(x_i^{-}\right)=\left.\delta\frac{\mathrm{d}v_n^{\mathrm{in}}}{\mathrm{d}x}\right|_{-}(x_i^{-}), &  \forall\,i=1,\ldots,N, \\[1em]
    \ds \left.\frac{\mathrm{d} v_{n}^{\mathrm{sc}}}{\mathrm{~d} x}\right|_{-}\left(x_i^{+}\right)-\left.\delta \frac{\mathrm{d} v^{\mathrm{sc}}_n}{\mathrm{d} x}\right|_{+}\left(x_i^{+}\right)=\left.\delta\frac{\mathrm{d}v_n^{\mathrm{in}}}{\mathrm{d}x}\right|_{+}(x_i^{+}), &  \forall\,i=1,\ldots,N, \\[1em]
	\ds  \left( \frac{\mathrm{d}}{\mathrm{d} |x|} - \mathrm{i}
    \frac{(\omega + n \Omega)}  {v_0}  \right) v^{\mathrm{sc}}_n = 0, & 
    x \in (-\infty, x_1^-) \cup (x_N^+, + \infty),
\end{array}		
\right.
\end{equation}
where the functions $v_{i, n}^{* *}(x)$ are defined through the convolution
\begin{equation}\label{def:conv_v}
    v_{i,n}^{* *}(x)=\frac{1}{\omega+n \Omega} \sum_{m=-\infty}^{\infty} k_{i,m}(\omega+(n-m) \Omega) v_{n-m}(x),\quad\forall\,x\in D_i,
\end{equation}
with $k_{i, m}$ being the Fourier series coefficients of $1 / \kappa_i(t)$:
\begin{align*}
    \frac{1}{\kappa_i(t)} = \sum_{n=-M}^M k_{i, n}\mathrm{e}^{-\mathrm{i}n\Omega t}.
\end{align*}

We refer to \cite[Appendix A]{ammari2024scattering} for a detailed derivation of \eqref{eq:1DL_system}. In the following, we denote the wave numbers corresponding to the $n$-th mode inside and outside of the resonators by
\begin{align*}
    k^{(n)}_{\mathrm{r}}:=\frac{\omega+n\Omega}{v_{\mathrm{r}}},\quad k^{(n)}:=\frac{\omega+n\Omega}{v_0},
\end{align*} 
respectively. \par


\begin{rem}
While the incident frequency $\omega$ is real, the resonant frequencies $\omega_i$ of \eqref{eq:1DL_system} are, in general, complex. Note that, as shown in   \cite{hiltunen2024coupled}, there may be real resonances. At such frequencies (known in physics as lasing points \cite{hiltunen2025energybalanceopticaltheorem}), the scattering problem \eqref{eq:1DL_system} is ill-posed and we do not expect solutions to be unique. For the remainder of this paper, we assume that all the resonant frequencies $\omega_i$ are strictly non-real, so that there exists a unique solution to \eqref{eq:1DL_system} for real-valued $\omega$.
Note that under this assumption, the well-posedness of \eqref{eq:1DL_system} for $\omega$ real and of \eqref{eq:dD_system_u} are equivalent. 
\end{rem}

\par
The following proposition summarises the results derived in previous papers \cite{jinghao_liora,ammari2024scattering}:
\begin{prop}\label{prop:pre}
The Fourier modes $v^{\mathrm{sc}}_n(x)$ of the wave field $u^{\mathrm{sc}}(x,t)$ solving \eqref{eq:1DL_system} are given by
    \begin{align}
        v^{\mathrm{sc}}_n(x)=\begin{cases}
            \alpha_n^i\mathrm{e}^{\mathrm{i}k^{(n)}x}+\beta_n^i\mathrm{e}^{-\mathrm{i}k^{(n)}x},&\forall\,x\in\left(x_{i-1}^+,x_{i}^-\right),\\
            \sum\limits_{j=-\infty}^{\infty}\left(a_j^i\mathrm{e}^{\mathrm{i}\lambda_j^ix}+b_j^i\mathrm{e}^{-\mathrm{i}\lambda_j^ix}\right)f_n^{j,i},&\forall\,x\in\left(x_i^-,x_i^+\right),
        \end{cases}
    \end{align}
    where the coefficients $\alpha_n^i,\,\beta_n^i,\,a_j^i,\,b_j^i$ need to be determined for all $i=1,\dots,N,\,j,n\in\mathbb{Z}$. The eigenpairs $\left(\lambda_j^i,\boldsymbol{f}^{j,i}\right)_{i\in\mathbb{Z}}$ corresponding to the $i$-th resonator can be obtained as stated in \cite[Lemma 2.1]{ammari2024scattering}.
\end{prop}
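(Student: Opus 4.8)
The plan is to treat the exterior and interior regions separately, since the governing system \eqref{eq:1DL_system} decouples across Fourier modes outside the resonators but couples them inside.

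First, in each gap $\left(x_{i-1}^+,x_i^-\right)$ the scattered field obeys the first equation of \eqref{eq:1DL_system}, namely $\frac{\mathrm{d}^2}{\mathrm{d}x^2}v_n^{\mathrm{sc}}+(k^{(n)})^2 v_n^{\mathrm{sc}}=0$ with $k^{(n)}=(\omega+n\Omega)/v_0$. This is a decoupled constant-coefficient Helmholtz equation for each $n$, whose general solution is a linear combination $\alpha_n^i\mathrm{e}^{\mathrm{i}k^{(n)}x}+\beta_n^i\mathrm{e}^{-\mathrm{i}k^{(n)}x}$. Since the incident mode $v_n^{\mathrm{in}}$ is itself a plane-wave solution of the same ODE, the total field $v_n=v_n^{\mathrm{sc}}+v_n^{\mathrm{in}}$ retains the same form, yielding the first case of the claimed representation. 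The radiation condition only constrains the coefficients in the two unbounded components and does not affect the functional form.

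Second, inside $D_i$ the modes are coupled through the convolution $v_{i,n}^{**}$ of \eqref{def:conv_v}. Substituting \eqref{def:conv_v} into the interior equation of \eqref{eq:1DL_system}, cancelling one factor of $(\omega+n\Omega)$ and reindexing the sum by $p=n-m$, I would recast the system as a single vectorial second-order ODE $\frac{\mathrm{d}^2}{\mathrm{d}x^2}\mathbf{v}+C_i\mathbf{v}=0$ on $D_i$, where $\mathbf{v}=(v_n)_{n\in\mathbb{Z}}$ and $C_i=\frac{\rho_{\mathrm{r}}}{\kappa_{\mathrm{r}}}\Lambda_\omega K_i\Lambda_\omega$, with $\Lambda_\omega=\mathrm{diag}(\omega+n\Omega)_{n}$ the diagonal frequency operator and $K_i$ the (banded) Toeplitz convolution operator with entries $(K_i)_{n,p}=k_{i,n-p}$ carrying the Fourier coefficients of $1/\kappa_i(t)$. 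Diagonalizing $C_i$ through the eigenpairs $(\lambda_j^i,\mathbf{f}^{j,i})$ of \cite[Lemma 2.1]{ammari2024scattering}, whose eigenvalues are the $(\lambda_j^i)^2$, decouples the vector ODE into scalar equations $c''+(\lambda_j^i)^2 c=0$ with solutions $\mathrm{e}^{\pm\mathrm{i}\lambda_j^i x}$. Expanding $\mathbf{v}$ in this eigenbasis then produces $\sum_j\left(a_j^i\mathrm{e}^{\mathrm{i}\lambda_j^i x}+b_j^i\mathrm{e}^{-\mathrm{i}\lambda_j^i x}\right)\mathbf{f}^{j,i}$, which reads componentwise as the second case upon setting $f_n^{j,i}=(\mathbf{f}^{j,i})_n$.

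The main obstacle is this interior step: one must know that $C_i$ admits a complete system of eigenvectors so that the eigenexpansion is legitimate, and that the resulting series converges in a suitable sense despite $\Lambda_\omega$ being unbounded. This spectral analysis is exactly the content of \cite[Lemma 2.1]{ammari2024scattering}, which I would invoke directly; with it available, the remaining work is routine substitution and linear algebra. The coefficients $\alpha_n^i,\beta_n^i,a_j^i,b_j^i$ are deliberately left undetermined, since the proposition asserts only the structural form of $v_n$; their values are subsequently fixed by the transmission and radiation conditions in \eqref{eq:1DL_system}.
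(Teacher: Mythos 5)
Your proposal is correct and follows essentially the same route as the paper, which offers no independent proof but simply defers to \cite[Lemma 2.1]{ammari2024scattering}: there the interior problem is precisely recast as the vector ODE $\mathbf{v}''+C_i\mathbf{v}=0$ and diagonalised via the eigenpairs $(\lambda_j^i,\mathbf{f}^{j,i})$, while the exterior modes decouple into scalar Helmholtz equations exactly as you describe. Your identification of the key obstacle (completeness of the eigenbasis of $C_i$) as the content of the cited lemma is also the right reading; in practice this is handled by the Fourier truncation of Remark \ref{rem:truncation}, which makes $C_i$ a finite matrix.
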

\begin{rem}\label{rem:truncation}
    The wave field $u(x,t)$ has infinitely many modes $v_n(x)$, which is problematic for the numerical implementation of the equations. Thus, we approximate $u$ by a truncated Fourier series with $K\in\mathbb{N}$ \cite{jinghao_liora}:
    \begin{align}\label{eq:utrunc}
    u(x,t)\approx\sum\limits_{n=-K}^{K}v_n(x)\mathrm{e}^{-\mathrm{i}(\omega+n\Omega )t}.
\end{align}
Throughout the remainder of this work, we consider a truncation of \eqref{eq:1DL_system} where we restrict to $-K\leq n \leq K$ and similarly seek a truncated $u$ of the form \eqref{eq:utrunc}. This allows for numerical results, since otherwise, we would be dealing with an infinite coupled system. We note that the convergence theory and an estimation of the error introduced by  truncating the Fourier series of $u$ remain open problems.
\end{rem}

\section{Static Metamaterial}\label{sec:static_material}
In this section, we shall focus on metamaterials with static material parameters. We aim to obtain the homogenised governing equation in a manner equivalent to \cite{Hai_Habib}, but for a one-dimensional material. The method we introduce will be later generalised to the time-modulated case in \Cref{sec:timedep_material}. \par 
\subsection{Transfer and Scattering Matrices}
First we introduce the one-dimensional transfer matrix and the scattering matrix.
\begin{prop}
    Consider the resonator $D_i$, and let the wave field on the left- and right-hand sides of $D_i$ be given by $v^{\mathrm{sc}}(x)=\alpha^{i}\mathrm{e}^{\mathrm{i}kx}+\beta^i\mathrm{e}^{-\mathrm{i}kx}$ and $v^{\mathrm{sc}}(x)=\alpha^{i+1}\mathrm{e}^{\mathrm{i}kx}+\beta^{i+1}\mathrm{e}^{-\mathrm{i}kx}$, respectively. Then the transfer matrix $\Tilde{S}_i$ corresponding to $D_i$ is given by
    \begin{align}
        \begin{bmatrix}
            \alpha^{i+1}\\
            \beta^{i+1}
        \end{bmatrix} = \Tilde{S}_i\begin{bmatrix}
            \alpha^{i}\\
            \beta^{i}
        \end{bmatrix},
    \end{align}
    with coefficients given by
    \begin{align}\label{eq:Stildei_def}
    \Tilde{S}_i=\begin{bmatrix}
            \Tilde{a}_i & \Tilde{b}_i\\
            \Tilde{c}_i & \Tilde{d}_i
        \end{bmatrix},\quad \begin{cases}
            \Tilde{a}_i:=-\frac{\mathrm{e}^{\mathrm{i}\left(k(x_i^--x_i^+)-(x_i^++x_i^-)k_{\mathrm{r}}\right)}\left(\left(\delta^2k^2+k_{\mathrm{r}}^2\right)\left(\mathrm{e}^{2\mathrm{i}x^-k_{\mathrm{r}}}-\mathrm{e}^{2\mathrm{i}x^+\mathrm{k}_{\mathrm{r}}}\right)-2\delta k_{\mathrm{r}}^2\left(\mathrm{e}^{2\mathrm{i}x^-k_{\mathrm{r}}}+\mathrm{e}^{2\mathrm{i}x^+\mathrm{k}_{\mathrm{r}}}\right)\right)}{4\delta k k_{\mathrm{r}}}, \\[4pt]
            \Tilde{b}_i:=\frac{\mathrm{e}^{-\mathrm{i}(x_i^-+x_i^+)(k+k_{\mathrm{r}})}\left(\delta^2k^2-k_{\mathrm{r}}^2\right)\left(\mathrm{e}^{2\mathrm{i}x_i^-k_{\mathrm{r}}}-\mathrm{e}^{2\mathrm{i}x_i^+k_{\mathrm{r}}}\right)}{4\delta k k_{\mathrm{r}}}, \\[4pt]
            \Tilde{c}_i:=\frac{\mathrm{e}^{\mathrm{i}(x_i^-+x_i^+)(k-k_{\mathrm{r}})}\left(-\delta^2k^2-k_{\mathrm{r}}^2\right)\left(\mathrm{e}^{2\mathrm{i}x_i^-k_{\mathrm{r}}}-\mathrm{e}^{2\mathrm{i}x_i^+k_{\mathrm{r}}}\right)}{4\delta k k_{\mathrm{r}}}, \\[4pt]
            \Tilde{d}_i:=\frac{\mathrm{e}^{\mathrm{i}\left(k(x_i^--x_i^+)-(x_i^++x_i^-)k_{\mathrm{r}}\right)}\left(\left(\delta^2k^2+k_{\mathrm{r}}^2\right)\left(\mathrm{e}^{2\mathrm{i}x^-k_{\mathrm{r}}}-\mathrm{e}^{2\mathrm{i}x^+\mathrm{k}_{\mathrm{r}}}\right)-2\delta k_{\mathrm{r}}^2\left(\mathrm{e}^{2\mathrm{i}x^-k_{\mathrm{r}}}+\mathrm{e}^{2\mathrm{i}x^+\mathrm{k}_{\mathrm{r}}}\right)\right)}{4\delta k k_{\mathrm{r}}}.
        \end{cases}
    \end{align}
    Furthermore, the scattering matrix $S_i$ is defined by
    \begin{align}\label{def:scattering_matrix}
        \begin{bmatrix}
            \alpha^{i+1} \\
            \beta^{i}
        \end{bmatrix}=S_i\begin{bmatrix}
            \alpha^{i} \\
            \beta^{i+1}
        \end{bmatrix},
    \end{align}
    with coefficients given by
    \begin{align}
        S_i=\begin{bmatrix}
            a_i & b_i \\
            c_i & d_i
        \end{bmatrix}, \quad \begin{cases}
            a_i:= \Tilde{a}_i-\frac{\Tilde{b}_i\Tilde{c}_i}{\Tilde{d}_i},\\
            b_i:= \frac{\Tilde{b}_i}{\Tilde{d}_i},\\
            c_i:= -\frac{\Tilde{c}_i}{\Tilde{d}_i},\\
            d_i:= \frac{1}{\Tilde{d}_i}.
        \end{cases}
    \end{align}
\end{prop}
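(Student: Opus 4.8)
The plan is to treat the two endpoints of $D_i$ as independent interface problems for the total field $v$, solve each by elementary linear algebra, and compose the two relations; the scattering matrix then follows by rearranging the resulting transfer relation.

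First I would record the transmission conditions. In the static regime the conditions in \eqref{eq:1DL_system}, specialised to the total field $v$ (so that the incident contributions on the right-hand sides are absorbed), reduce to continuity of $v$ together with continuity of the flux $\tfrac1{\rho}v'$ across $\partial D_i$; since $\rho=\rho_0$ outside and $\rho=\rho_{\mathrm r}$ inside, the latter reads: the interior derivative equals $\delta$ times the exterior derivative at each of $x_i^-$ and $x_i^+$. I would then encode the data $(v(x),v'(x))$ of the exterior field $\alpha\mathrm e^{\mathrm ikx}+\beta\mathrm e^{-\mathrm ikx}$ and of the interior field $a\mathrm e^{\mathrm ik_{\mathrm r}x}+b\mathrm e^{-\mathrm ik_{\mathrm r}x}$ by the $2\times2$ evaluation matrices $E(x)$ and $R(x)$, and collect the flux scaling into $\Delta:=\mathrm{diag}(1,\delta)$.

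With this notation the condition at $x_i^-$ becomes $E(x_i^-)(\alpha^i,\beta^i)^{\top}=\Delta^{-1}R(x_i^-)(a,b)^{\top}$, which I invert to write $(a,b)$ in terms of $(\alpha^i,\beta^i)$; the condition at $x_i^+$ reads $E(x_i^+)(\alpha^{i+1},\beta^{i+1})^{\top}=\Delta^{-1}R(x_i^+)(a,b)^{\top}$ and gives $(\alpha^{i+1},\beta^{i+1})$ in terms of $(a,b)$. Eliminating the interior amplitudes yields
\[
\Tilde S_i=E(x_i^+)^{-1}\Delta^{-1}R(x_i^+)\,R(x_i^-)^{-1}\Delta\,E(x_i^-),
\]
and expanding this product of five elementary $2\times2$ factors produces the stated entries $\Tilde a_i,\Tilde b_i,\Tilde c_i,\Tilde d_i$. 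A convenient safeguard along the way is that $\det E(x)=-2\mathrm ik$ and $\det R(x)=-2\mathrm ik_{\mathrm r}$ are independent of $x$, whence $\det\Tilde S_i=1$; this identity not only catches sign errors but also explains the coincidence $\Tilde a_i=\Tilde d_i$ visible in the formulas.

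Finally, the scattering matrix is obtained from $\Tilde S_i$ by a purely algebraic rearrangement of the transfer relation $\alpha^{i+1}=\Tilde a_i\alpha^i+\Tilde b_i\beta^i$, $\beta^{i+1}=\Tilde c_i\alpha^i+\Tilde d_i\beta^i$: solving the second equation for $\beta^i$ (which needs only $\Tilde d_i\neq0$) and substituting into the first expresses the outgoing pair $(\alpha^{i+1},\beta^i)$ through the incoming pair $(\alpha^i,\beta^{i+1})$, giving $a_i=\Tilde a_i-\Tilde b_i\Tilde c_i/\Tilde d_i$, $b_i=\Tilde b_i/\Tilde d_i$, $c_i=-\Tilde c_i/\Tilde d_i$ and $d_i=1/\Tilde d_i$. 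I expect no conceptual difficulty here; the only real obstacle is the exponential bookkeeping in the matrix product, where keeping the phase factors $\mathrm e^{\pm\mathrm ikx_i^\pm}$, $\mathrm e^{\pm\mathrm ik_{\mathrm r}x_i^\pm}$ and the placement of $\delta$ consistent across both interfaces is where factor and sign slips are most likely, with the relation $\det\Tilde S_i=1$ serving as the cleanest consistency check.
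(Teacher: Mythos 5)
Your core argument is the same as the paper's: the paper likewise encodes the continuity and flux conditions at the two endpoints as $2\times2$ matrix equations, eliminates the interior amplitudes, and obtains $\Tilde{S}_i=P_i^{-1}V^{-1}G_iF_i^{-1}VQ_i$, which is exactly your product $E(x_i^+)^{-1}\Delta^{-1}R(x_i^+)R(x_i^-)^{-1}\Delta E(x_i^-)$ once the factors $\mathrm{i}k$, $\mathrm{i}k_{\mathrm{r}}$ in your derivative rows are pulled out into the paper's diagonal matrix $V=\mathrm{diag}(1,\delta k/k_{\mathrm{r}})$; the rearrangement producing $S_i$ from $\Tilde{S}_i$ (using only $\Tilde{d}_i\neq0$) is also identical.

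One side claim in your write-up is wrong, however. From $\det E=-2\mathrm{i}k$ and $\det R=-2\mathrm{i}k_{\mathrm{r}}$ you correctly conclude $\det\Tilde{S}_i=\Tilde{a}_i\Tilde{d}_i-\Tilde{b}_i\Tilde{c}_i=1$, and this is a legitimate safeguard; but a unit determinant constrains only that combination and cannot ``explain'' an identity $\Tilde{a}_i=\Tilde{d}_i$. Moreover, no such identity holds. Carrying out the product with $L:=x_i^+-x_i^-$ gives
\begin{align*}
\Tilde{a}_i&=\mathrm{e}^{-\mathrm{i}kL}\,\frac{4\delta kk_{\mathrm{r}}\cos(k_{\mathrm{r}}L)+2\mathrm{i}\left(\delta^2k^2+k_{\mathrm{r}}^2\right)\sin(k_{\mathrm{r}}L)}{4\delta kk_{\mathrm{r}}},\\
\Tilde{d}_i&=\mathrm{e}^{\mathrm{i}kL}\,\frac{4\delta kk_{\mathrm{r}}\cos(k_{\mathrm{r}}L)-2\mathrm{i}\left(\delta^2k^2+k_{\mathrm{r}}^2\right)\sin(k_{\mathrm{r}}L)}{4\delta kk_{\mathrm{r}}},
\end{align*}
so the true relation (for real parameters) is $\Tilde{d}_i=\overline{\Tilde{a}_i}$, a reciprocity property, not equality; likewise $\Tilde{c}_i=\overline{\Tilde{b}_i}$. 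Relatedly, you will not be able to match the proposition's printed entries verbatim: as displayed they make $\Tilde{d}_i=-\Tilde{a}_i$, the term $2\delta k_{\mathrm{r}}^2$ should read $2\delta kk_{\mathrm{r}}$, and the factor $-\delta^2k^2-k_{\mathrm{r}}^2$ in $\Tilde{c}_i$ should read $\delta^2k^2-k_{\mathrm{r}}^2$. These are typos in the statement, and your determinant check would in fact detect them (the printed entries violate $\det\Tilde{S}_i=1$) rather than explain them. Since the remark is not load-bearing, your proof stands once it is deleted or corrected.
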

\begin{proof}
If we let the interior wave field be given by $v^{\mathrm{sc}}(x) = a\mathrm{e}^{\mathrm{i}k_{\mathrm{r}}x}+b\mathrm{e}^{-\mathrm{i}k_{\mathrm{r}}x}, \ x\in D_i$, the continuity and transmission conditions give us
    \begin{align}
        &\begin{cases}\label{eq:CC1_CC2}
            \alpha^i\mathrm{e}^{\mathrm{i}kx_i^-}+\beta^i\mathrm{e}^{-\mathrm{i}kx_i^-}=a\mathrm{e}^{\mathrm{i}k_{\mathrm{r}}x_i^-}+b\mathrm{e}^{-\mathrm{i}k_{\mathrm{r}}x_i^-},\\
            \alpha^{i+1}\mathrm{e}^{\mathrm{i}kx_i^+}+\beta^{i+1}\mathrm{e}^{-\mathrm{i}kx_i^+}=a\mathrm{e}^{\mathrm{i}k_{\mathrm{r}}x_i^+}+b\mathrm{e}^{-\mathrm{i}k_{\mathrm{r}}x_i^+},
        \end{cases}\\ 
        &\begin{cases}\label{eq:TC1_TC2}
            \delta k\left(\alpha^i\mathrm{e}^{\mathrm{i}kx_i^-}-\beta^i\mathrm{e}^{-\mathrm{i}kx_i^-}\right)=k_{\mathrm{r}}\left(a\mathrm{e}^{\mathrm{i}k_{\mathrm{r}}x_i^-}-b\mathrm{e}^{-\mathrm{i}k_{\mathrm{r}}x_i^-}\right),\\
            \delta k\left(\alpha^{i+1}\mathrm{e}^{\mathrm{i}kx_i^+}-\beta^{i+1}\mathrm{e}^{-\mathrm{i}kx_i^+}\right)=k_{\mathrm{r}}\left(a\mathrm{e}^{\mathrm{i}k_{\mathrm{r}}x_i^+}-b\mathrm{e}^{-\mathrm{i}k_{\mathrm{r}}x_i^+}\right),
        \end{cases}
    \end{align}
    respectively. By introducing the matrices
    \begin{align*}
        &P_i:=\begin{bmatrix}
            \mathrm{e}^{\mathrm{i}kx_i^+} & \mathrm{e}^{-\mathrm{i}kx_i^+} \\
            \mathrm{e}^{\mathrm{i}kx_i^+} & -\mathrm{e}^{-\mathrm{i}kx_i^+}
        \end{bmatrix},\quad V:=\begin{bmatrix}
            1 & 0 \\ 0 & \delta\frac{k}{k_{\mathrm{r}}}
        \end{bmatrix},\quad G_i:=\begin{bmatrix}
            \mathrm{e}^{\mathrm{i}k_{\mathrm{r}}x_i^+} & \mathrm{e}^{-\mathrm{i}k_{\mathrm{r}}x_i^+} \\
            \mathrm{e}^{\mathrm{i}k_{\mathrm{r}}x_i^+} & -\mathrm{e}^{-\mathrm{i}k_{\mathrm{r}}x_i^+}
        \end{bmatrix},\\
        &F_i:= \begin{bmatrix}
            \mathrm{e}^{\mathrm{i}k_{\mathrm{r}}x_i^-} & \mathrm{e}^{-\mathrm{i}k_{\mathrm{r}}x_i^-} \\
            \mathrm{e}^{\mathrm{i}k_{\mathrm{r}}x_i^-} & -\mathrm{e}^{-\mathrm{i}k_{\mathrm{r}}x_i^-}
        \end{bmatrix},\quad  Q_i:=\begin{bmatrix}
            \mathrm{e}^{\mathrm{i}kx_i^-} & \mathrm{e}^{-\mathrm{i}kx_i^-} \\
            \mathrm{e}^{\mathrm{i}kx_i^-} & -\mathrm{e}^{-\mathrm{i}kx_i^-}
        \end{bmatrix},
    \end{align*}
    and rewriting and grouping the equations \eqref{eq:CC1_CC2} and \eqref{eq:TC1_TC2}, we obtain
    \begin{align}
        \Tilde{S}_i:=P_i^{-1}V^{-1}G_iF_i^{-1}VQ_i,
    \end{align}
    which is exactly \eqref{eq:Stildei_def}. Lastly, it is a 
    straightforward task to obtain the expression of $S_i$ in terms of $\Tilde{S}_i$.
\end{proof}

With the explicit formula for $S_i$ at hand, we will derive a point-scattering approximation in the limit when the resonator length $\ell$ tends to zero. To fix the asymptotic regime, we will consider the subwavelength regime where $\omega = O(\ell)$ and $\delta = O(\ell)$ as $\ell \to 0$. In the remainder of this paper, we will denote the $n\times n$ identity matrix by $I_n$.  
\begin{lemma}\label{lemma:S_characterisation_l0}
    Let the $N$ resonators each be of length $\ell$ and centred around $z_i$, and set $\delta=\gamma\ell$ and $\omega=\mu\ell$, for fixed $\gamma,\,\mu>0$ independent of $\ell$. As $\ell\to0$, the following holds:
    \begin{align}\label{eq:S_characterisation_l0}
        S_i=I_2+g\begin{bmatrix}
            1 & \mathrm{e}^{-2\mathrm{i}kz_i} \\ \mathrm{e}^{2\mathrm{i}kz_i} & 1
        \end{bmatrix}+O(\ell^2),\quad \text{with}\quad g:=\frac{\mathrm{i}\ell\mu v}{2\gamma v_{\mathrm{r}}^2}.
    \end{align}
\end{lemma}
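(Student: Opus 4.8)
The plan is to feed the subwavelength scalings into the closed form $\tilde S_i = P_i^{-1}V^{-1}G_iF_i^{-1}VQ_i$ from the preceding proposition, expand the transfer-matrix entries to first order in $\ell$, and then pass to $S_i$ via the algebraic relations $a_i = d_i = 1/\tilde d_i$, $b_i = \tilde b_i/\tilde d_i$, $c_i = -\tilde c_i/\tilde d_i$. First I fix the symmetric parametrisation $x_i^\pm = z_i \pm \ell/2$ and record the orders: with $\omega=\mu\ell$, $\delta=\gamma\ell$ one has $k=\mu\ell/v_0$ and $k_{\mathrm r}=\mu\ell/v_{\mathrm r}$, so $k,k_{\mathrm r},\delta = O(\ell)$ while $k\ell,\,k_{\mathrm r}\ell = O(\ell^2)$.

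The cleanest intermediate object is the interior propagation block, whose direct multiplication gives
\[
V^{-1}G_iF_i^{-1}V=\begin{bmatrix}\cos(k_{\mathrm r}\ell) & \mathrm{i}\tfrac{\delta k}{k_{\mathrm r}}\sin(k_{\mathrm r}\ell)\\[2pt] \mathrm{i}\tfrac{k_{\mathrm r}}{\delta k}\sin(k_{\mathrm r}\ell) & \cos(k_{\mathrm r}\ell)\end{bmatrix}.
\]
Conjugating by the exterior phase matrices $P_i^{-1}(\,\cdot\,)Q_i$ and collecting the resulting phases ($e^{\pm\mathrm{i}k\ell}$ on the diagonal, $e^{\pm 2\mathrm{i}kz_i}$ off the diagonal) yields
\[
\tilde S_i=\begin{bmatrix}\tfrac12 e^{-\mathrm{i}k\ell}\big(2C+\mathrm{i}(A+B)\big) & \tfrac{\mathrm{i}}{2}e^{-2\mathrm{i}kz_i}(B-A)\\[2pt] \tfrac{\mathrm{i}}{2}e^{2\mathrm{i}kz_i}(A-B) & \tfrac12 e^{\mathrm{i}k\ell}\big(2C-\mathrm{i}(A+B)\big)\end{bmatrix},
\]
where $C=\cos(k_{\mathrm r}\ell)$, $A=\tfrac{\delta k}{k_{\mathrm r}}\sin(k_{\mathrm r}\ell)$, $B=\tfrac{k_{\mathrm r}}{\delta k}\sin(k_{\mathrm r}\ell)$. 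Re-expressing the entries through $C,A,B$ is precisely what makes the asymptotics transparent.

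Next I read off the orders: $C=1+O(\ell^4)$ and $A=\tfrac{\delta k}{k_{\mathrm r}}\sin(k_{\mathrm r}\ell)=O(\ell^3)$, whereas the high contrast is decisive for
\[
B=\frac{k_{\mathrm r}}{\delta k}\sin(k_{\mathrm r}\ell)=\frac{v_0}{\gamma v_{\mathrm r}}\,\frac{\sin(k_{\mathrm r}\ell)}{\ell}=\frac{\mu v_0}{\gamma v_{\mathrm r}^2}\,\ell+O(\ell^5)=O(\ell),
\]
since the factor $\delta=\gamma\ell$ in the denominator lifts $B$ to first order. Hence $A+B=B+O(\ell^3)$, $B-A=B+O(\ell^3)$, and $e^{\pm\mathrm{i}k\ell}=1+O(\ell^2)$, so that $\tilde d_i=C-\tfrac{\mathrm i}{2}(A+B)+O(\ell^2)=1-\tfrac{\mathrm i}{2}B+O(\ell^2)=1-g+O(\ell^2)$ with $g:=\tfrac{\mathrm i\mu v_0\ell}{2\gamma v_{\mathrm r}^2}$.

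Finally I assemble $S_i$. Using $\det\tilde S_i=1$ (which follows because $\det V$ cancels and $\det P_i=\det Q_i=\det F_i=\det G_i=-2$) gives $a_i=d_i=1/\tilde d_i=1+g+O(\ell^2)$, while $b_i=\tilde b_i/\tilde d_i=\tfrac{\mathrm i}{2}e^{-2\mathrm{i}kz_i}B\,(1+O(\ell))=g\,e^{-2\mathrm{i}kz_i}+O(\ell^2)$ and likewise $c_i=-\tilde c_i/\tilde d_i=g\,e^{2\mathrm{i}kz_i}+O(\ell^2)$; collecting the four entries produces exactly $I_2+g[\,\cdot\,]+O(\ell^2)$, with $v=v_0$ in the statement. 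The main obstacle is this asymptotic bookkeeping: one must recognise that the identity part comes from $2C$, that the entire $O(\ell)$ correction is carried by the single term $B$ (with $A$ negligible at order $\ell^3$), that the diagonal phases $e^{\pm\mathrm{i}k\ell}$ perturb only at $O(\ell^2)$ while the off-diagonal phases $e^{\pm 2\mathrm{i}kz_i}$ must be retained, and that no hidden $O(\ell)$ contribution survives in the cross term $\tilde b_i\tilde c_i/\tilde d_i$ (it is $O(\ell^2)$) — which is exactly what the determinant identity guarantees.
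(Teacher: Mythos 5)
Your proposal is correct and follows essentially the same route as the paper, whose proof is only the one-line remark that the result follows "by expanding the coefficients $a_i,\,b_i,\,c_i,\,d_i$ in $\ell$ and omitting higher-order terms"; your computation (the interior block $V^{-1}G_iF_i^{-1}V$ in terms of $C$, $A$, $B$, the order bookkeeping $A=O(\ell^3)$, $B=O(\ell)$, and the identity $\det\Tilde{S}_i=C^2+AB=1$ giving $a_i=d_i=1/\Tilde{d}_i$) supplies exactly the details the paper omits, and it is consistent with the explicit entries in \eqref{eq:Stildei_def}. Your identification $v=v_0$ in the definition of $g$ is also the correct reading of the paper's notation.
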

\begin{proof}
    The above form of $S_i$ can be obtained by expanding the coefficients $a_i,\,b_i,\,c_i,\,d_i$ in $\ell$ and omitting higher-order terms.
\end{proof}
\begin{rem}
    By plugging in \eqref{eq:S_characterisation_l0} into \eqref{def:scattering_matrix}, we obtain for sufficiently small $\ell>0$
    \begin{align}
        \begin{bmatrix}
            \Tilde{\alpha}^{i+1} \\ \Tilde{\beta}^i
        \end{bmatrix} = g\begin{bmatrix}
            1 & \mathrm{e}^{-2\mathrm{i}kz_i} \\ \mathrm{e}^{2\mathrm{i}kz_i} & 1
        \end{bmatrix}\begin{bmatrix}
            \alpha^i \\ \beta^{i+1}
        \end{bmatrix}+O(\ell^2),
    \end{align}
    where
    \begin{align}
        \Tilde{\alpha}^{i+1}:=\alpha^{i+1}-\alpha^i \quad \Tilde{\beta}^i:=\beta^i-\beta^{i+1}.
    \end{align}
\end{rem}
The above lemma leads to an expression of the scattered wave field using the one-dimensional Helmholtz Green's function
\begin{align}\label{def:greens_fct}
    G^k(x-y):=\frac{\mathrm{e}^{\mathrm{i}k|x-y|}}{2\mathrm{i}k}.
\end{align}
Let the total wave field $u(x,t)=\mathrm{e}^{-\mathrm{i}\omega t}v(x)$ be given by
\begin{align}
    v(x)=\begin{cases}
        v^{\mathrm{sc}}(x)+v^{\mathrm{in}}(x),&x\notin D,\\
        v^{\mathrm{sc}}(x),&x\in D,
    \end{cases}
\end{align}
where $v^{\mathrm{in}}(x):=\alpha^1\mathrm{e}^{\mathrm{i}kx}+\beta^{N+1}\mathrm{e}^{-\mathrm{i}kx}$ is the overall incident wave on the system of $N$ resonators. Moreover, we denote the incident wave field on $D_i$ by $v^{\mathrm{in}}_i(x)$ and the scattered wave field of $D_i$ by $v^{\mathrm{sc}}_i(x)$. Here, $v^{\mathrm{in}}$ is the overall incident field, while $v^{\mathrm{in}}_i$ is the field impinging on the $i$-th resonator. In other words,  $v^{\mathrm{in}}_i$ is given by the sum of the overall incident field and the scattered fields from the other resonators. The incident and scattered wave field of $D_i$ are given by
\begin{align}\label{def:vinv}
    v_i^{\mathrm{in}}(x):=v^{\mathrm{in}}(x)+\sum\limits_{j\neq i}v^{\mathrm{sc}}_j(x),\quad v_i^{\mathrm{sc}}(x):=\begin{cases}
        \Tilde{\beta}^i\mathrm{e}^{-\mathrm{i}kx},&x<z_i,\\
        \Tilde{\alpha}^i\mathrm{e}^{\mathrm{i}kx},&x>z_i.
    \end{cases}
\end{align}
Using the definition of the Green's function \eqref{def:greens_fct} one obtains the following characterisation:
\begin{align}\label{eq:vsci_greensfct}
    v^{\mathrm{sc}}_i(x)=\Tilde{g}G^k(x-z_i)v^{\mathrm{in}}_i(z_i),
\end{align}
where $\Tilde{g}:=2\mathrm{i}kg$, with $g$ as defined in Lemma \ref{lemma:S_characterisation_l0}. Equation \eqref{eq:vsci_greensfct} is a point-scattering approximation for a static resonator, which characterises the scattered wave from $D_i$ in terms of the Green's function. This formula sets the ground to derive the homogenised equation equivalently as in \cite{Hai_Habib}.
\subsection{Homogenised Equation}
In this section, we shall obtain the homogenised equation using our previous results. First, we substitute the scattered wave field of $D_i$ \eqref{eq:vsci_greensfct} into the incident wave field \eqref{def:vinv} of $D_i$ and evaluate at $z_i$:
\begin{align}\label{eq:vinsyst}
    v^{\mathrm{in}}_i(z_i)=v^{\mathrm{in}}(z_i)+\sum\limits_{j\neq i}\Tilde{g}G^k(z_i-z_j)v^{\mathrm{in}}_j(z_j).
\end{align}
Equation \eqref{eq:vinsyst} is a linear system of equations, \textit{i.e.}, we can define the vectors $\boldsymbol{x},\,\boldsymbol{b}\in\mathbb{C}^N$ and the matrix $T\in\mathbb{C}^{N\times N}$ with entries
\begin{align}
    x_i:=v^{\mathrm{in}}_i(z_i),\quad b_i:=v^{\mathrm{in}}(z_i),\quad T_{ij}:=\begin{cases}
        \Tilde{g}G^k(z_i-z_j),&i\neq j,\\
        0,&i=j,
    \end{cases}
\end{align}
then \eqref{eq:vinsyst} leads to the following:
\begin{align}\label{eq:xTb}
    \boldsymbol{x}-T\boldsymbol{x}=\boldsymbol{b}.
\end{align}
In order to homogenise our system, we must assume that while the number of resonators grows, they are still well-separated, which means that their size simultaneously decreases. In other words, we consider an asymptotic regime where the volume fraction of resonators is fixed.
\begin{ass}\label{ass:1}
Assume that there exists a constant $\Lambda>0$ such that $N\ell=\Lambda$, for all $N,\,\ell>0$. 
\end{ass}
Next, we have
\begin{align}\label{def:beta}
    \Tilde{g}=\frac{\mathrm{i}\ell\mu v}{2\gamma v_{\mathrm{r}}^2}\frac{2\mathrm{i}\mu \ell}{v}=-\frac{1}{N}\frac{\mu^2\ell}{\gamma v_{\mathrm{r}}^2}\Lambda=:\frac{1}{N}\beta\Lambda.
\end{align}
Hence, we can conclude that
\begin{align}
    v^{\mathrm{in}}_i(z_i)=v^{\mathrm{in}}(z_i)+\frac{1}{N}\sum\limits_{j\neq i}\beta\Lambda G^k(z_i-z_j)v^{\mathrm{in}}_j(z_j).
\end{align}
Following \cite{Hai_Habib}, we introduce the limiting density $\Tilde{V}(x)$ of resonators.
\begin{ass}\label{ass:density}
    Let $A\subset\mathcal{U}$ be a measurable set, we define 
    \begin{align}
        \Theta^N(A):=\frac{1}{N}\times\{\text{number of points } z_i \in A\}.
    \end{align}
    Assume that there exists $\Tilde{V}\in L^{\infty}(\mathcal{U})$ such that
    \begin{align}\label{eq:Vtilde}
        \Theta^N(A)\to\int_A\Tilde{V}(x)\,\mathrm{d}x\quad\text{as }N\to\infty.
    \end{align}
\end{ass}
We define the $L^{\infty}$-function 
$$V(x):= \left\{ \begin{array}{l} \beta\Lambda\Tilde{V}(x), \quad x \in \mathcal{U},\\ 
0, \quad x \in \mathbb{R} \setminus \mathcal{U}.
\end{array}
\right.
$$

The following assumption on the regularity of the distribution of the centres of the resonators is feasible based on \cite[Lemma 4.1]{Hai_Habib}.
\begin{ass}\label{ass:riemannsum_limit}
    Let $k$ be in a fixed neighbourhood of zero. We assume that for any $f\in C^{0,\alpha}(\mathcal{U})$
    \begin{align}
        \max_{1\leq i\leq N}\left|\frac{1}{N}\sum\limits_{j\neq i}\beta\Lambda G^k(z_i-z_j)f(z_j)-\int_{\mathcal{U}}G^k(z_i-y)V(y)f(y)\,\mathrm{d}y\right|\leq\frac{C}{N^{\alpha/3}}||f||_{C^{0,\alpha}(\mathcal{U})},
    \end{align}
    for some constants $0<\alpha<1$ and $C>0$, independent of $k$.
\end{ass}
We now define the operator $\mathcal{T}:C^{0,\alpha}(\mathcal{U})\to C^{0,\alpha}(\mathcal{U})$ by
\begin{align}
    \mathcal{T}f(x):=\int_{\mathcal{U}}G^k(x-y)V(y)f(y)\,\mathrm{d}y.
\end{align}
Analogously as in \cite{Hai_Habib}, this operator can be seen as the continuum limit of the matrix operator $T$. Since we define $\mathcal{T}$ in an equivalent manner as in the higher-dimensional case, the properties proven in \cite[Lemma 4.2]{Hai_Habib} still hold true in one dimension.\par 
Let $v^i\in C^{0,\alpha}(\mathcal{U})$, if $\psi$ is the unique solution of $\psi-\mathcal{T}\psi=v^i$, then
\begin{align}\label{eq:homogeq_static}
    \left(\frac{\mathrm{d}^2}{\mathrm{d}x^2}+k^2\right)\psi-V\psi=0\quad\text{in }\mathbb{R},
\end{align}
which is exactly the homogenised equation. Similarly as in \cite{Hai_Habib},  it can be proven that $\psi$ is the continuum limit of the solution $\boldsymbol{x}$ to \eqref{eq:xTb}.

\begin{rem}
Note that, in general, \eqref{eq:homogeq_static} may not be well-posed since $k^2-V$ may change sign inside $\mathcal{U}$. The effective potential $V$ may also be zero in some parts of $\mathcal{U}$ depending on the distribution of the centers of the resonators described by $\Tilde{V}$. This makes the study of the existence and uniqueness of a solution 
to \eqref{eq:homogeq_static} a delicate open problem. Moreover, note that if $k^2-V$ is negative, then the effective medium is dissipative. 
    \label{rem:well-posed}
\end{rem}

\section{Time-dependent Metamaterial}\label{sec:timedep_material}
In this section, we focus on time-dependent materials and obtain a homogenised equation. For that, we follow an equivalent approach as in Section \ref{sec:static_material}. Note that, as explained in
Remark \ref{rem:truncation}, we work with a truncated system with Fourier series of length $K$.\par 
Our aim is first to define the transfer and scattering matrix for time-dependent materials. To do so, we write out the continuity conditions and transmission conditions:
\begin{equation}
    \begin{cases}\label{eq:tdep_CC}
        \alpha^i_n\mathrm{e}^{\mathrm{i}k^{(n)}x_i^-}+\beta^i_n\mathrm{e}^{-\mathrm{i}k^{(n)}x_i^-}=\sum\limits_{j=-K}^{K}\left(a_j^i\mathrm{e}^{\mathrm{i}\lambda_j^ix_i^-}+b_j^i\mathrm{e}^{-\mathrm{i}\lambda_j^ix_i^-}\right)f_n^{j,i},\\
        \alpha^{i+1}_n\mathrm{e}^{\mathrm{i}k^{(n)}x_i^+}+\beta^{i+1}_n\mathrm{e}^{-\mathrm{i}k^{(n)}x_i^+}=\sum\limits_{j=-K}^{K}\left(a_j^i\mathrm{e}^{\mathrm{i}\lambda_j^ix_i^+}+b_j^i\mathrm{e}^{-\mathrm{i}\lambda_j^ix_i^+}\right)f_n^{j,i},
    \end{cases}
    \end{equation}
    and
    \begin{equation}
    \begin{cases}\label{eq:tdep_TC}
        \delta k^{(n)}\left(\alpha^i_n\mathrm{e}^{\mathrm{i}k^{(n)}x_i^-}-\beta^i_n\mathrm{e}^{-\mathrm{i}k^{(n)}x_i^-}\right)=\sum\limits_{j=-K}^{K}\left(a_j^i\mathrm{e}^{\mathrm{i}\lambda_j^ix_i^-}-b_j^i\mathrm{e}^{-\mathrm{i}\lambda_j^ix_i^-}\right)\lambda_j^if_n^{j,i},\\
        \delta k^{(n)}\left(\alpha^{i+1}_n\mathrm{e}^{\mathrm{i}k^{(n)}x_i^+}-\beta^{i+1}_n\mathrm{e}^{-\mathrm{i}k^{(n)}x_i^+}\right)=\sum\limits_{j=-K}^{K}\left(a_j^i\mathrm{e}^{\mathrm{i}\lambda_j^ix_i^+}-b_j^i\mathrm{e}^{-\mathrm{i}\lambda_j^ix_i^+}\right)\lambda_j^if_n^{j,i},
    \end{cases}
\end{equation}
for all $n=-K,\dots,K$. Next, we define a matrix $F_i:=\left(f_n^{j,i}\right)_{n,j=-K}^{K}$ and its inverse $G_i:=F_i^{-1}$. We will rewrite the above four equations, which hold for all $n=-K,\dots,K$, into a single matrix equation. We introduce the following notation for the unknown coefficients:
\begin{align}
    \boldsymbol{\alpha}^i:=\begin{bmatrix}
        \alpha^i_{-K}\\\vdots\\\alpha_K^i
    \end{bmatrix},\quad \boldsymbol{\beta}^i:=\begin{bmatrix}
        \beta^i_{-K}\\\vdots\\\beta^i_K
    \end{bmatrix},\quad \boldsymbol{w}^i:=\begin{bmatrix}
            \boldsymbol{\alpha}^i\\\boldsymbol{\beta}^i
        \end{bmatrix},\quad \boldsymbol{v}_j^i:=\begin{bmatrix}
            a_j^i\\b_j^i
        \end{bmatrix},\quad \boldsymbol{v}^i:=\begin{bmatrix}
            \boldsymbol{v}_{-K}^i\\\vdots\\\boldsymbol{v}_K^i
        \end{bmatrix}.
\end{align}
For the sake of simplicity, we omit the indices $i$ in the following notation. Then we introduce the matrices
\begin{align}
    &E_{\pm,\mathrm{l}}:=\mathrm{diag}\left(\mathrm{e}^{\pm\mathrm{i}k^{(n)}x^-}\right)_{n=-K}^ {K},\,\, E_{\pm,\mathrm{r}}:=\mathrm{diag}\left(\mathrm{e}^{\pm\mathrm{i}k^{(n)}x^+}\right)_{n=-K}^ {K},\,\, K:=\mathrm{diag}\left(k^{(n)}\right)_{n=-K}^ {K},\\
    &T^{\pm}:=\mathrm{diag}\left(\left[\mathrm{e}^{\mathrm{i}\lambda_j^ix^{\pm}},\,\mathrm{e}^{-\mathrm{i}\lambda_j^ix^{\pm}}\right]\right)_{j=-K}^ {K},\quad T^{\pm}_{\lambda}:=\mathrm{diag}\left(\left[\lambda_j^i\mathrm{e}^{\mathrm{i}\lambda_j^ix^{\pm}},\,-\lambda_j^i\mathrm{e}^{-\mathrm{i}\lambda_j^ix^{\pm}}\right]\right)_{j=-K}^ {K},\\
    &\mathcal{M}^-:=\begin{bmatrix}
        GE_{+,\mathrm{l}} & GE_{-,\mathrm{l}}\\
        \delta GKE_{+,\mathrm{l}} & -\delta GKE_{-,\mathrm{l}}
    \end{bmatrix},\,\, \mathcal{T}^-:=\begin{bmatrix}
        T^-\\T^-_{\lambda}
    \end{bmatrix},\,\, \mathcal{M}^+:=\begin{bmatrix}
        E_{+,\mathrm{r}} & E_{-,\mathrm{r}}\\ \delta KE_{+,\mathrm{r}} & -\delta KE_{-,\mathrm{r}}
    \end{bmatrix},\,\,\mathcal{T}^+:=\begin{bmatrix}
        FT^+\\FT^+_{\lambda}
    \end{bmatrix}.\label{def:matrices_Stilde}
\end{align}
With this notation it can be proven that \eqref{eq:tdep_CC} and \eqref{eq:tdep_TC} are equivalent to
\begin{align}
    \mathcal{M}^-\boldsymbol{w}^i=\mathcal{T}\boldsymbol{v}^i,\quad \mathcal{M}^+\boldsymbol{w}^{i+1}=\mathcal{T}^+\boldsymbol{v}^i.
\end{align}
We have now transformed the four equations \eqref{eq:tdep_CC} and \eqref{eq:tdep_TC}, that are true for each $n$, into two linear systems of equations, which contain all the modes $n$. By combining the two equations into a single one, we arrive at the following proposition, which defines the time-dependent scattering matrix. We assume that $\omega$ is not a multiple of $\Omega$, whereby a direct calculation shows that $\mathcal{M}^+$ and $\mathcal{T}^-$ are invertible. 
\begin{prop}
    Consider the resonator $D_i$, and assume that $\omega \neq m\Omega$ for all $m\in \Z$. Let the wave field on the left side of $D_i$ be given by the modes $v_n(x)=\alpha^i_n\mathrm{e}^{\mathrm{i}k^{(n)}x}+\beta^i_n\mathrm{e}^{-\mathrm{i}k^{(n)}x}$ and the wave field on the right side of $D_i$ be given by the modes $v_n(x)=\alpha^{i+1}_n\mathrm{e}^{\mathrm{i}k^{(n)}x}+\beta^{i+1}_n\mathrm{e}^{-\mathrm{i}k^{(n)}x}$. Then the corresponding transfer matrix is given by
    \begin{align}
        \boldsymbol{w}^{i+1}=\Tilde{S}_i\boldsymbol{w}^i,\quad \Tilde{S}_i:=\left(\mathcal{M}^+\right)^{-1}\mathcal{T}^+\left(\mathcal{T}^-\right)^{-1}\mathcal{M}^-=:\begin{bmatrix}
            \Tilde{S}_{11} & \Tilde{S}_{12} \\
            \Tilde{S}_{21} & \Tilde{S}_{22}
        \end{bmatrix}\in\mathbb{C}^{2(2K+1)\times2(2K+1)},
    \end{align}
    where these matrices are defined in \eqref{def:matrices_Stilde}. Furthermore, the scattering matrix is defined by
    \begin{align}\label{def:Si_tdep}
        S_i:=\begin{bmatrix}
            S_{11} & S_{12} \\ S_{21} & S_{22}
        \end{bmatrix},\quad \begin{cases}
            S_{11}:=\Tilde{S}_{11}-\Tilde{S}_{12}\Tilde{S}_{22}^{-1}\Tilde{S}_{21},\\
            S_{12}:=\Tilde{S}_{12}\Tilde{S}_{22}^{-1},\\
            S_{21}:=-\Tilde{S}_{22}^{-1}\Tilde{S}_{21},\\
            S_{22}:=\Tilde{S}_{22}^{-1},
        \end{cases}
    \end{align}
    and it satisfies
    \begin{align}
        \begin{bmatrix}
            \boldsymbol{\alpha}^{i+1}\\\boldsymbol{\beta}^i
        \end{bmatrix}=S_i\begin{bmatrix}
            \boldsymbol{\alpha}^i\\\boldsymbol{\beta}^{i+1}
        \end{bmatrix}.
    \end{align}
\end{prop}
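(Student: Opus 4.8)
The plan is to mirror the derivation used for the static scattering matrix: first recast the scalar interface conditions \eqref{eq:tdep_CC}--\eqref{eq:tdep_TC} (one pair for each mode $n$) as two matrix equations coupling all modes, then eliminate the interior coefficients $\mathbf{v}^i$ to obtain the transfer matrix $\Tilde{S}_i$, and finally perform the standard transfer-to-scattering block inversion to read off $S_i$. The hypothesis $\omega\neq m\Omega$ enters only through the invertibility facts that make each elimination legitimate.

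For the reformulation I would stack the conditions over $n=-K,\dots,K$. The exterior side contributes the diagonal factors $E_\pm$ and $K$, while the interior side contributes the eigenvector weights $f_n^{j,i}$, which assemble exactly into $F_i$ (with $G_i=F_i^{-1}$). Pulling $G_i$ across the continuity and transmission equations at $x_i^-$ produces $\mathcal{M}^-\mathbf{w}^i=\mathcal{T}^-\mathbf{v}^i$, and doing the same at $x_i^+$ produces $\mathcal{M}^+\mathbf{w}^{i+1}=\mathcal{T}^+\mathbf{v}^i$; this step is bookkeeping, the only care being to track where $F_i$ versus $G_i$ appears and to keep the ordering of the stacked unknowns consistent with \eqref{def:matrices_Stilde}. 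To justify eliminating $\mathbf{v}^i$, I would observe that $\mathcal{T}^-$ is, after reordering rows and columns by the index $j$, block diagonal with $2\times2$ Wronskian-type blocks $\begin{bmatrix} \mathrm{e}^{\mathrm{i}\lambda_j^i x_i^-} & \mathrm{e}^{-\mathrm{i}\lambda_j^i x_i^-}\\ \lambda_j^i\mathrm{e}^{\mathrm{i}\lambda_j^i x_i^-} & -\lambda_j^i\mathrm{e}^{-\mathrm{i}\lambda_j^i x_i^-}\end{bmatrix}$ of determinant $-2\lambda_j^i$, while $\mathcal{M}^+$ has commuting diagonal blocks, so that its determinant is, up to sign, $\prod_n 2\delta k^{(n)}$. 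Both products are nonzero precisely because $\omega\neq m\Omega$ forces every $k^{(n)}=(\omega+n\Omega)/v_0\neq0$ and every interior eigenvalue $\lambda_j^i\neq0$. Solving the first equation for $\mathbf{v}^i=(\mathcal{T}^-)^{-1}\mathcal{M}^-\mathbf{w}^i$ and substituting into the second yields $\mathbf{w}^{i+1}=(\mathcal{M}^+)^{-1}\mathcal{T}^+(\mathcal{T}^-)^{-1}\mathcal{M}^-\mathbf{w}^i$, which is the claimed $\Tilde{S}_i$.

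The scattering matrix is then pure linear algebra on the $2\times2$ block form of $\Tilde{S}_i$ acting on $(\boldsymbol{\alpha},\boldsymbol{\beta})$. From $\boldsymbol{\alpha}^{i+1}=\Tilde{S}_{11}\boldsymbol{\alpha}^i+\Tilde{S}_{12}\boldsymbol{\beta}^i$ and $\boldsymbol{\beta}^{i+1}=\Tilde{S}_{21}\boldsymbol{\alpha}^i+\Tilde{S}_{22}\boldsymbol{\beta}^i$, I would solve the second relation for $\boldsymbol{\beta}^i=\Tilde{S}_{22}^{-1}\boldsymbol{\beta}^{i+1}-\Tilde{S}_{22}^{-1}\Tilde{S}_{21}\boldsymbol{\alpha}^i$ and insert it into the first; collecting terms reproduces exactly the four blocks $S_{11},S_{12},S_{21},S_{22}$ of \eqref{def:Si_tdep} together with the relation $(\boldsymbol{\alpha}^{i+1},\boldsymbol{\beta}^i)^\top=S_i(\boldsymbol{\alpha}^i,\boldsymbol{\beta}^{i+1})^\top$.

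I expect the genuine obstacle to be the invertibility facts rather than the algebra. The invertibility of $\mathcal{M}^+$ and $\mathcal{T}^-$ rests on the nonvanishing of $k^{(n)}$ and of the interior eigenvalues $\lambda_j^i$ from \cite[Lemma 2.1]{ammari2024scattering}, which I would extract from the explicit form of those eigenpairs under $\omega\neq m\Omega$. More delicate is the invertibility of $\Tilde{S}_{22}$ needed to define the scattering matrix: a singular $\Tilde{S}_{22}$ would produce a nonzero field with $\boldsymbol{\alpha}^i=\boldsymbol{\beta}^{i+1}=0$, that is, outgoing amplitudes with no incoming wave from either side, corresponding to a trapped or resonant mode at a real frequency. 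I would exclude this by invoking the well-posedness of the scattering problem established in \cite{ammari2024scattering}, where the scattered field is shown to be uniquely determined by the incident field, which forces $\Tilde{S}_{22}$ to be invertible; alternatively one could estimate the relevant block of the explicit product directly under the standing assumption on $\omega$.
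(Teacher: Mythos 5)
Your proposal is correct and follows essentially the same route as the paper: the paper likewise recasts \eqref{eq:tdep_CC}--\eqref{eq:tdep_TC} as the stacked systems $\mathcal{M}^-\mathbf{w}^i=\mathcal{T}^-\mathbf{v}^i$ and $\mathcal{M}^+\mathbf{w}^{i+1}=\mathcal{T}^+\mathbf{v}^i$, eliminates $\mathbf{v}^i$ using the invertibility of $\mathcal{T}^-$ and $\mathcal{M}^+$ (which it asserts via a ``direct calculation'' under the assumption $\omega\neq m\Omega$), and obtains $S_i$ from $\Tilde{S}_i$ by exactly the block elimination you perform. The details you supply beyond the paper's sketch --- the $2\times2$ Wronskian blocks of $\mathcal{T}^-$ with determinant $-2\lambda_j^i$, the determinant $\pm\prod_n 2\delta k^{(n)}$ of $\mathcal{M}^+$, and the well-posedness argument for the invertibility of $\Tilde{S}_{22}$, which the paper tacitly assumes when writing $\Tilde{S}_{22}^{-1}$ in \eqref{def:Si_tdep} --- only strengthen the argument.
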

As in the previous section, we proceed by taking the limit $\ell \to 0$ and derive a point-scattering approximation equivalent to Lemma \ref{lemma:S_characterisation_l0}. We take the same asymptotic scaling as before:
\begin{equation}
\delta=\gamma\ell\quad \omega=\mu\ell, \quad \Omega=\xi\ell
\end{equation}
for fixed $\gamma,\,\mu,\,\xi>0$ independent of $\ell$.  Moreover, recall that $\lambda_j^i$ are the wave numbers in the $i$-th resonator, as in \Cref{prop:pre}. The coefficients $\lambda_j^i$ are given by the eigenvalues of the matrix defined in \cite[Lemma III.3]{jinghao_liora} and, under the chosen parameter regime, it is straight-forward to show that $\lambda_j^i$ is of order $\ell$. As $\ell \to 0$, we then define the coefficients $c_j^i$ through 
\begin{equation}
\lambda_j^i=c_j^i\ell.
\end{equation}
Note that the exact definition of $c_j^i$ depends on $\omega$ and $\Omega$. Although there is no closed-form expression for these coefficients, they can be efficiently computed as the eigenvalues of the temporal Sturm-Liouville problem in the interior of $D_i$. Following \cite[Section V]{hiltunen2024coupled}, the coefficients $\lambda_j^i$ can be phrased as the eigenvalues of a Sturm-Liouville problem in the temporal variable. By taking a Fourier transform of this eigenvalue problem, we obtain the equivalent characterisation given in \cite[Lemma III.3]{jinghao_liora}. In the remainder of this paper, we will denote the $n\times n$ identity matrix by $I_n$.  
\begin{lemma}\label{lemma:S_characterisation_l0_tdep}
    Let the $N$ resonators each be of length $\ell$ and centred around $z_i$, and set $\delta=\gamma\ell,\,\omega=\mu\ell,\,\Omega=\xi\ell,\,\lambda_j^i=c_j^i\ell$, for fixed $\gamma,\,\mu,\,\xi>0$ independent of $\ell$. Assume that $\omega \neq m\Omega$ for all $m\in \Z$. Then, as $\ell\to0$, the following holds:
    \begin{align}\label{def:asymptotic_Si_tdep}
        S_i=\begin{bmatrix}
            I_{2K+1} & 0 \\ 0 & I_{2K+1}
        \end{bmatrix}+\begin{bmatrix}
            g^i & 0 \\ 0 & g^i
        \end{bmatrix}\begin{bmatrix}
            I_{2K+1} & \mathrm{diag}\left(\mathrm{e}^{-2\mathrm{i}k^{(n)}z_i}\right)_{n=-K}^K \\ \mathrm{diag}\left(\mathrm{e}^{2\mathrm{i}k^{(n)}z_i}\right)_{n=-K}^K & I_{2K+1}
        \end{bmatrix} + O(\ell^2),
    \end{align}
    where $g^i\in\mathbb{C}^{(2K+1)\times(2K+1)}$ is given by
    \begin{align}\label{def:g_Acal_tdep}
        g^i:=\left(I_{2K+1}-\mathcal{A}_i\right)^{-1}\mathcal{A}_i,\quad \mathcal{A}_i:=\frac{\ell}{2\gamma}\mathrm{diag}\left(\frac{v}{\mu+n\xi}\right)_{n=-K}^{K}F\mathrm{diag}\left(\mathrm{i}(c_j^i)^2\right)_{j=-K}^{K}G.
    \end{align}
\end{lemma}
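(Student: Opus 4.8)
The plan is to follow the route of \Cref{lemma:S_characterisation_l0}, but now at the level of the $(2K+1)\times(2K+1)$ blocks, keeping careful track of the non-commutativity introduced by the eigenvector matrix $F$ and its inverse $G$. First I would substitute $\delta=\gamma\ell$, $\omega=\mu\ell$, $\Omega=\xi\ell$ and $\lambda_j^i=c_j^i\ell$ into the building blocks of \eqref{def:matrices_Stilde} and record their size in $\ell$. Since $k^{(n)}=(\mu+n\xi)\ell/v$, we have $K=\ell\widehat K$ with $\widehat K:=\mathrm{diag}((\mu+n\xi)/v)_{n=-K}^{K}$ of order one, so that the transmission blocks $\delta K=O(\ell^2)$ while the continuity blocks are $O(1)$, and $T^\pm_\lambda=O(\ell)$ because every entry carries a factor $\lambda_j^i=c_j^i\ell$. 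Writing $x_i^\pm=z_i\pm\ell/2$, each outer exponential factorises as $\mathrm{e}^{\pm\mathrm{i}k^{(n)}x_i^\pm}=\mathrm{e}^{\pm\mathrm{i}k^{(n)}z_i}(1+O(\ell^2))$, since $k^{(n)}\ell=O(\ell^2)$; thus all $E$-type matrices equal a centre-phase matrix $\mathrm{diag}(\mathrm{e}^{\pm\mathrm{i}k^{(n)}z_i})$ times the identity up to $O(\ell^2)$, and the interior exponentials factorise the same way, their half-width corrections again entering only at $O(\ell^2)$.

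Second, I would assemble $\Tilde S_i=(\mathcal M^+)^{-1}\mathcal T^+(\mathcal T^-)^{-1}\mathcal M^-$ blockwise and expand to first order in $\ell$. The interior factor $\mathcal T^+(\mathcal T^-)^{-1}$ is where the resonator physics is concentrated: the interior centre phases $\mathrm{e}^{\pm\mathrm{i}\lambda_j^iz_i}$ enter $\mathcal T^+$ and $\mathcal T^-$ identically and cancel in the product, leaving only the width dependence $x_i^+-x_i^-=\ell$, which is precisely why the resulting correction is independent of $z_i$. Carrying the expansion through, the leading term of $\Tilde S_i$ is the identity, and the $O(\ell)$ correction organises into the matrix $\mathcal A_i$ of \eqref{def:g_Acal_tdep}: the factor $\mathrm{diag}(v/(\mu+n\xi))=\ell K^{-1}$ is the inverse of the outer wavenumber matrix that surfaces from the $(\delta K)^{-1}$ appearing when the transmission rows are resolved; the factor $\mathrm{diag}(\mathrm{i}(c_j^i)^2)=\mathrm{i}\ell^{-2}\mathrm{diag}((\lambda_j^i)^2)$ encodes the interior Helmholtz operator $-\mathrm{d}^2/\mathrm{d}x^2$ in its eigenbasis (eigenvalues $(\lambda_j^i)^2$), integrated across the resonator of width $\ell$; and the conjugation $F(\cdot)G$ transports this interior operator back to the Fourier-mode basis. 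Meanwhile the outer centre phases from $\mathcal M^\pm$ survive and reassemble into the off-diagonal blocks $\mathrm{diag}(\mathrm{e}^{\mp2\mathrm{i}k^{(n)}z_i})$, exactly as the scalar factors $\mathrm{e}^{\pm2\mathrm{i}kz_i}$ did in \Cref{lemma:S_characterisation_l0}.

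Third, I would pass from the transfer matrix to the scattering matrix via the block Schur-complement formulas \eqref{def:Si_tdep}. Stripped of the centre phases, the $(2,2)$ block reduces to $\Tilde S_{22}=I_{2K+1}-\mathcal A_i+O(\ell^2)$, so that $S_{22}=\Tilde S_{22}^{-1}=(I_{2K+1}-\mathcal A_i)^{-1}+O(\ell^2)$; subtracting the identity gives exactly $g^i=(I_{2K+1}-\mathcal A_i)^{-1}\mathcal A_i$, and the analogous treatment of $S_{11}=\Tilde S_{11}-\Tilde S_{12}\Tilde S_{22}^{-1}\Tilde S_{21}$ together with the off-diagonal blocks $S_{12},S_{21}$ reproduces the phase matrices $\mathrm{diag}(\mathrm{e}^{\mp2\mathrm{i}k^{(n)}z_i})$ multiplying the same $g^i$. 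As a consistency check, in the unmodulated single-mode limit $F=G=1$ and $c_0^i=\mu/v_{\mathrm r}$, whence $\mathcal A_i\to\mathrm{i}\ell\mu v/(2\gamma v_{\mathrm r}^2)=g$ and $(I_{2K+1}-\mathcal A_i)^{-1}\mathcal A_i=g+O(\ell^2)$, recovering \Cref{lemma:S_characterisation_l0}.

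The main obstacle is the non-commutative bookkeeping through the two inversions $(\mathcal M^+)^{-1}$ and $(\mathcal T^-)^{-1}$: each must be expanded to subleading order, and one must check that their $O(\ell)$ contributions recombine into the single conjugated operator $\mathcal A_i$ rather than leaving uncancelled phases or a misordered product of $F$, $G$, $K^{-1}$ and $\mathrm{diag}((c_j^i)^2)$. The two most delicate points are verifying that the interior centre phases cancel cleanly, so that $g^i$ is genuinely independent of $z_i$, and that the Schur complement yields the resummed factor $(I_{2K+1}-\mathcal A_i)^{-1}$; once these are in place, the remainder is a routine Taylor expansion truncated at $O(\ell^2)$.
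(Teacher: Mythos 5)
Your proposal is correct and follows essentially the same route as the paper's proof: factorise $\Tilde{S}_i=(\mathcal{M}^+)^{-1}\mathcal{T}^+(\mathcal{T}^-)^{-1}\mathcal{M}^-$, invert the blocks, use the interior phase cancellation and the asymptotics $T^+_\lambda\hat{T}=\mathrm{i}\,\mathrm{diag}\left((c_j^i)^2\right)\ell^3+O(\ell^4)$ together with the $\frac{1}{\delta}K^{-1}$ factor from $(\mathcal{M}^+)^{-1}$ to get $\Tilde{S}_{11}=I+\mathcal{A}_i$, $\Tilde{S}_{12}=\mathcal{A}_i$, $\Tilde{S}_{21}=-\mathcal{A}_i$, $\Tilde{S}_{22}=I-\mathcal{A}_i$ up to $O(\ell^2)$, and then pass to $S_i$ via \eqref{def:Si_tdep}. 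Your explicit bookkeeping of the centre phases $\mathrm{e}^{\pm\mathrm{i}k^{(n)}z_i}$ (rather than the paper's shortcut $E_\pm=I_{2K+1}+O(\ell)$ with the phases restored at the end) and your consistency check against \Cref{lemma:S_characterisation_l0} are refinements of, not departures from, the paper's argument.
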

For the proof of Lemma \ref{lemma:S_characterisation_l0_tdep} we shall use the following well-known result from linear algebra:
\begin{lemma}\label{lemma:inverse_blockmatrix}
    The inverse of an invertible $2m\times2m$ matrix
    \begin{align}
        M=\begin{bmatrix}
            A & B\\ C & D
        \end{bmatrix},
    \end{align}
    with blocks $A,\,B,\,C,\,D\in\mathbb{C}^{m\times m}$ is given by
    \begin{align}
        M^{-1}=\begin{bmatrix}
            A^{-1}+A^{-1}BS^{-1}CA^{-1} & -A^{-1}BS^{-1} \\ -S^{-1}CA^{-1} & S^{-1}
        \end{bmatrix},\quad S:=D-CA^{-1}B,
    \end{align}
    provided that $A$ and $S$ are invertible.  Note that $S$ is called the Schur complement of $A$.
\end{lemma}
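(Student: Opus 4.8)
The plan is to verify the claimed formula for $M^{-1}$ by direct multiplication, checking that the proposed expression is a genuine two-sided inverse of $M$. Since the candidate is explicit, no inversion needs to be performed from scratch; I only need to confirm that multiplying $M$ by the candidate yields $I_{2m}$. I would first record the two standing hypotheses that make every symbol meaningful: $A$ is invertible (so $A^{-1}$ exists) and the Schur complement $S := D - CA^{-1}B$ is invertible (so $S^{-1}$ exists). These are exactly the assumptions under which the formula is asserted.

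First I would compute the product $M M^{-1}$ blockwise, writing $M^{-1} = \begin{bmatrix} P & Q \\ R & S^{-1} \end{bmatrix}$ with $P := A^{-1} + A^{-1} B S^{-1} C A^{-1}$, $Q := -A^{-1} B S^{-1}$, and $R := -S^{-1} C A^{-1}$, and then evaluating the four $m \times m$ blocks of
\begin{align}
    M M^{-1} = \begin{bmatrix} A & B \\ C & D \end{bmatrix}\begin{bmatrix} P & Q \\ R & S^{-1} \end{bmatrix} = \begin{bmatrix} AP + BR & AQ + BS^{-1} \\ CP + DR & CQ + DS^{-1} \end{bmatrix}.
\end{align}
The top-left block is $AP + BR = I_m + B S^{-1} C A^{-1} - B S^{-1} C A^{-1} = I_m$, and the top-right block is $AQ + BS^{-1} = -BS^{-1} + BS^{-1} = 0$, each following immediately from $A A^{-1} = I_m$. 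The bottom-right block is $CQ + DS^{-1} = (-CA^{-1}B + D)S^{-1} = S S^{-1} = I_m$, which is precisely where the definition $S = D - CA^{-1}B$ gets used. The one block requiring a moment of care is the bottom-left, $CP + DR$; here I would substitute and regroup to get $CA^{-1} + CA^{-1}BS^{-1}CA^{-1} - DS^{-1}CA^{-1} = CA^{-1} - (D - CA^{-1}B)S^{-1}CA^{-1} = CA^{-1} - S S^{-1} C A^{-1} = 0$, again using the definition of $S$.

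Having shown $M M^{-1} = I_{2m}$, I would note that for square matrices over a field a right inverse is automatically a two-sided inverse, so the computation above suffices and the analogous verification of $M^{-1} M = I_{2m}$ is redundant. (If one prefers a self-contained argument, the symmetric calculation $M^{-1} M = I_{2m}$ proceeds identically and can be stated in one line.) The result being a standard fact of linear algebra, I expect no genuine obstacle: the only place any thought is required is the bottom-left block, where the Schur-complement identity $D - CA^{-1}B = S$ must be invoked to collapse the terms, and the sole role of the invertibility hypotheses is to guarantee that $A^{-1}$ and $S^{-1}$ appearing in the candidate are well defined. I would therefore keep the write-up to the blockwise product, highlight the cancellation in each of the four blocks, and conclude.
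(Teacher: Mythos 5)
Your proof is correct, and the paper offers no proof to compare against: it states this lemma as a ``well-known result from linear algebra'' and uses it directly in the proof of Lemma \ref{lemma:S_characterisation_l0_tdep}. Your blockwise verification of $MM^{-1}=I_{2m}$ is the standard argument, each of the four block computations checks out, and your appeal to the fact that a right inverse of a square matrix over $\mathbb{C}$ is automatically two-sided is valid; note that your argument in fact shows slightly more than is stated, namely that invertibility of $A$ and $S$ already implies invertibility of $M$, making that hypothesis in the lemma redundant.
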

With the result of Lemma \ref{lemma:inverse_blockmatrix}, we can now prove Lemma \ref{lemma:S_characterisation_l0_tdep}.
\begin{proof}[Proof of Lemma \ref{lemma:S_characterisation_l0_tdep}]
    Recall that the transfer matrix is given by $\Tilde{S}_i:=\left(\mathcal{M}^+\right)^{-1}\mathcal{T}^+\left(\mathcal{T}^-\right)^{-1}\mathcal{M}^-$, as defined in \eqref{def:matrices_Stilde}. The proof will consist of simplifying each term appearing in $\Tilde{S}_i$ and then taking the limit $\ell\to 0$.\par
    \underline{Invert $\mathcal{M}^+$}: \\
    The Schur complement of the first block of $\mathcal{M}^+$ is given by
    \begin{align}
        S_{\mathcal{M}^+}=-\delta KE_{-,\mathrm{r}}-(\delta KE_{+,\mathrm{r}})\left(E_{+,\mathrm{r}}\right)^{-1}E_{-,\mathrm{r}}=-2\delta KE_{-,\mathrm{r}}\quad\implies\quad S_{\mathcal{M}^+}^{-1}=-\frac{1}{2\delta}E_{+,\mathrm{r}}K^{-1}.
    \end{align}
    Then, by Lemma \ref{lemma:inverse_blockmatrix}, the inverse of $\mathcal{M}^+$ is given by
    \begin{align}
        \left(\mathcal{M}^+\right)^{-1}=\frac{1}{2}\begin{bmatrix}
            E_{-,\mathrm{r}} & \frac{1}{\delta}E_{-,\mathrm{r}}K^{-1} \\ E_{+,\mathrm{r}} & -\frac{1}{\delta}E_{+,\mathrm{r}}K^{-1}
        \end{bmatrix}.
    \end{align}\par
    \underline{Invert $\mathcal{T}^-$}:\\
    The inverse of $\mathcal{T}^-$ can be computed to be given by
    \begin{align}
        \left(\mathcal{T}^-\right)^{-1}=\begin{bmatrix}
            \hat{T}&\hat{T}_{\lambda }
        \end{bmatrix},\quad \hat{T}:=\mathrm{diag}\left(\frac{1}{2}\begin{bmatrix}
            \mathrm{e}^{-\mathrm{i}\lambda_j^ix^-}\\\mathrm{e}^{\mathrm{i}\lambda_j^ix^-}
        \end{bmatrix}\right)_{j=-K}^K,\quad \hat{T}_\lambda:=\mathrm{diag}\left(\frac{1}{2\lambda_j^i}\begin{bmatrix}
            \mathrm{e}^{-\mathrm{i}\lambda_j^ix^-}\\-\mathrm{e}^{\mathrm{i}\lambda_j^ix^-}
        \end{bmatrix}\right)_{j=-K}^K.
    \end{align}\par 
    \underline{Multiply $\mathcal{T}^+$ by $\left(\mathcal{T}^-\right)^{-1}$}:\\
    Recall that $\mathcal{T}^+$ is given by
    \begin{align}
        \mathcal{T}^+=\begin{bmatrix}
            FT^+ \\ FT^+_{\lambda}
        \end{bmatrix}=\underbrace{\begin{bmatrix}
            F & 0 \\ 0 & F
        \end{bmatrix}}_{=:\mathcal{F}}\begin{bmatrix}
            T^+ \\ T^+_{\lambda}
        \end{bmatrix}.
    \end{align}
    Then one obtains
    \begin{align}
        \mathcal{T}^+\left(\mathcal{T}^-\right)^{-1}=\mathcal{F}\begin{bmatrix}
            T^+\hat{T} & T^+\hat{T}_{\lambda} \\ T^+_{\lambda}\hat{T} & T^+_{\lambda}\hat{T}_{\lambda}
        \end{bmatrix}.
    \end{align}\par
    \underline{Simplify $\mathcal{M}^-$}:\\
    The matrix $\mathcal{M}^-$ can be written as
    \begin{align}
        \mathcal{M}^-=\underbrace{\begin{bmatrix}
            G & 0 \\ 0 & G
        \end{bmatrix}}_{=:\mathcal{G}}\begin{bmatrix}
            E_{-,\mathrm{l}} & E_{+,\mathrm{l}} \\ \delta KE_{-,\mathrm{l}}& -\delta KE_{+,\mathrm{l}}
        \end{bmatrix}.
    \end{align}\par
    \underline{Compute $\Tilde{S}_i$}:\\
    Finally, the transfer matrix can be written out as
    \begin{align}
        \Tilde{S}_i=\frac{1}{2}\begin{bmatrix}
            E_{-,\mathrm{r}} & \frac{1}{\delta}E_{-,\mathrm{r}}K^{-1} \\ E_{+,\mathrm{r}} & -\frac{1}{\delta}E_{+,\mathrm{r}}K^{-1}
        \end{bmatrix}\mathcal{F}\begin{bmatrix}
            T^+\hat{T} & T^+\hat{T}_{\lambda} \\ T^+_{\lambda}\hat{T} & T^+_{\lambda}\hat{T}_{\lambda}
        \end{bmatrix}\mathcal{G}\begin{bmatrix}
            E_{-,\mathrm{l}} & E_{+,\mathrm{l}} \\ \delta KE_{-,\mathrm{l}} & -\delta KE_{+,\mathrm{l}}
        \end{bmatrix},
    \end{align}
    and through some calculations one can achieve the following explicit definitions of the four submatrices of $\Tilde{S}_i$:
    {\small
    \begin{align*}
        &\Tilde{S}_{11}=\frac{1}{2}\left(\left(E_{-,\mathrm{r}}F(T^+\hat{T})+\frac{1}{\delta}E_{-,\mathrm{r}}K^{-1}F(T^+_{\lambda}\hat{T})\right)GE_{+,\mathrm{l}}+\left(\delta E_{-,\mathrm{r}}F(T^+\hat{T}_{\lambda})+E_{-,\mathrm{r}}K^{-1}F(T^+_{\lambda}\hat{T}_{\lambda})\right)GKE_{+,\mathrm{l}}\right),\\
        &\Tilde{S}_{12}=\frac{1}{2}\left(\left(E_{-,\mathrm{r}}F(T^+\hat{T})+\frac{1}{\delta}E_{-,\mathrm{r}}K^{-1}F(T^+_{\lambda}\hat{T})\right)GE_{-,\mathrm{l}}-\left(\delta E_{-,\mathrm{r}}F(T^+\hat{T}_{\lambda})+E_{-,\mathrm{r}}K^{-1}F(T^+_{\lambda}\hat{T}_{\lambda})\right)GKE_{-,\mathrm{l}}\right),\\
        &\Tilde{S}_{21}=\frac{1}{2}\left(\left(E_{+,\mathrm{r}}F(T^+\hat{T})-\frac{1}{\delta}E_{+,\mathrm{r}}K^{-1}F(T^+_{\lambda}\hat{T})\right)GE_{+,\mathrm{l}}+\left(\delta E_{+,\mathrm{r}}F(T^+\hat{T}_{\lambda})-E_{+,\mathrm{r}}K^{-1}F(T^+_{\lambda}\hat{T}_{\lambda})\right)GKE_{+,\mathrm{l}}\right),\\
        &\Tilde{S}_{22}=\frac{1}{2}\left(\left(E_{+,\mathrm{r}}F(T^+\hat{T})-\frac{1}{\delta}E_{+,\mathrm{r}}K^{-1}F(T^+_{\lambda}\hat{T})\right)GE_{-,\mathrm{l}}-\left(\delta E_{+,\mathrm{r}}F(T^+\hat{T}_{\lambda})-E_{+,\mathrm{r}}K^{-1}F(T^+_{\lambda}\hat{T}_{\lambda})\right)GKE_{-,\mathrm{l}}\right).
    \end{align*}}\par
    \underline{Let $\ell\to0$}:\\
    We now want to compute the leading order terms of $\Tilde{S}_{ij}$, as $\ell\to0$. One can prove
    \begin{align*}
        &T^+\hat{T}=I_{2K+1}+O(\ell^4), 
        \quad T^+_{\lambda}\hat{T}=\mathrm{diag}\left(\mathrm{i}(c_j^i)^2\right)_{j=-K}^K\ell^3+O(\ell^4),\quad T^+_\lambda \hat{T}_{\lambda}=I_{2K+1}+O(\ell^4),
    \end{align*}
    and that $E_{\pm}=I_{2K+1}+O(\ell)$. Plugging these asymptotic formulas into $\Tilde{S}_{ij}$ and neglecting higher order terms, we obtain
    \begin{align}
        \begin{cases}
            \Tilde{S}_{11}=I_{2K+1}+\mathcal{A}_i+O(\ell^2),\\
            \Tilde{S}_{12}=\mathcal{A}_i+O(\ell^2),\\
            \Tilde{S}_{21}=-\mathcal{A}_i+O(\ell^2),\\
            \Tilde{S}_{22}=I_{2K+1}-\mathcal{A}_i+O(\ell^2),
        \end{cases}
    \end{align}
    with $\mathcal{A}_i$ defined by \eqref{def:g_Acal_tdep}. If we then define $g^i=\left(I_{2K+1}-\mathcal{A}_i\right)^{-1}\mathcal{A}_i$ and use the formulas \eqref{def:Si_tdep}, we obtain the desired result \eqref{def:asymptotic_Si_tdep}.
\end{proof}
We now want to use the asymptotic result of Lemma \ref{lemma:S_characterisation_l0_tdep} in order to characterise the total wave field as $\ell\to 0$. Let us separate the total wave field as a sum of incident and scattered fields
\begin{align}\label{eq:vn}
    v_n(x)=\begin{cases}
        v_n^{\mathrm{sc}}(x)+v_n^{\mathrm{in}}(x),&x\notin D,\\
        v_n^{\mathrm{sc}}(x),&x\in D,
    \end{cases}
\end{align}
where $v_n^{\mathrm{in}}(x):=\alpha^1_n\mathrm{e}^{\mathrm{i}k^{(n)}x}+\beta^{N+1}_n\mathrm{e}^{-\mathrm{i}k^{(n)}x}$. The scattered wave field emerging from $D_i$ is given by
\begin{align}\label{eq:vsc_in_tdep}
    v^{\mathrm{sc}}_{i,n}(x)=G^{k^{(n)}}(x-z_i)\sum\limits_{m=-K}^{K}\Tilde{g}^i_{nm}v_{i,m}^{\mathrm{in}}(z_i),
\end{align}
where $\Tilde{g}^i_{nm}:=2\mathrm{i}k^{(n)}g^i_{nm}$ and we use the notation $g^i:=\left(g^i_{nm}\right)_{m,n=-K}^K$. Here, $v_{i,m}^{\mathrm{in}}$ is the field that is incident to $D_i$, and the scattered field is defined piecewise by
\begin{align}
    v^{\mathrm{sc}}_n(x):=\sum\limits_{i=1}^Nv^{\mathrm{sc}}_{i,n}(x).
\end{align}
Furthermore, the wave field impinging on $D_i$ is given by
\begin{align}\label{eq:vin_in_tdep}
    v^{\mathrm{in}}_{i,n}(x)=v^{\mathrm{in}}_n(x)+\sum\limits_{j\neq i}v_{j,n}^{\mathrm{sc}}(x)=v^{\mathrm{in}}_n(x)+\sum\limits_{j\neq i}\sum\limits_{m=-K}^K\Tilde{g}^j_{nm}v^{\mathrm{in}}_{j,m}(z_j)G^{k^{(n)}}(x-z_j).
\end{align}
Note that the characterisations \eqref{eq:vsc_in_tdep} and \eqref{eq:vin_in_tdep} hold true only as $\ell\to 0$ and can be derived using the results proven in Lemma \ref{lemma:S_characterisation_l0_tdep}.
\subsection{Homogenised Model}
To obtain the time-dependent homogenised equation, we first evaluate \eqref{eq:vin_in_tdep} at each resonator $z_i$:
\begin{align}\label{eq:vinzi_in_tdep}
    v^{\mathrm{in}}_{i,n}(z_i)=v^{\mathrm{in}}_n(z_i)+\sum\limits_{j\neq i}v_{j,n}^{\mathrm{sc}}(z_i)=v^{\mathrm{in}}_n(z_i)+\sum\limits_{j\neq i}\sum\limits_{m=-K}^K\Tilde{g}^j_{nm}v^{\mathrm{in}}_{j,m}(z_j)G^{k^{(n)}}(z_i-z_j),
\end{align}
for all $i=1,\dots,N$ and $n=-K,\dots,K$. Now we define the vectors $\boldsymbol{x},\boldsymbol{b}\in\mathbb{C}^{N(2K+1)}$ with entries
\begin{align}
    \boldsymbol{x}=\begin{bmatrix}
        v^{\mathrm{in}}_{1,-K}(z_1)\\ \vdots \\ v^{\mathrm{in}}_{1,K}(z_1) \\ \vdots \\ v^{\mathrm{in}}_{N,-K}(z_N)\\ \vdots \\ v^{\mathrm{in}}_{N,K}(z_N)
    \end{bmatrix},\quad \boldsymbol{b}=\begin{bmatrix}
        v^{\mathrm{in}}_{-K}(z_1)\\ \vdots \\ v^{\mathrm{in}}_{K}(z_1) \\ \vdots \\ v^{\mathrm{in}}_{-K}(z_N)\\ \vdots \\ v^{\mathrm{in}}_{K}(z_N)
    \end{bmatrix}
\end{align}
and let the matrix $T\in\mathbb{C}^{N(2K+1)\times N(2K+1)}$ be defined by
\begin{align}
    T^{ij}_{nm}:=&\begin{cases}
        \Tilde{g}_{nm}^jG^{k^{(n)}}(z_i-z_j),&i\neq j,\\
        0,&i=j,
    \end{cases}\qquad T:=\begin{bmatrix}\begin{bmatrix}
T^{ij}_{nm}
    \end{bmatrix}_{n,m=-K}^K
    \end{bmatrix}_{i,j=1}^N.
\end{align}
This notation allows us to rewrite \eqref{eq:vinzi_in_tdep} as
\begin{align}\label{eq:xTb_tdep}
    \boldsymbol{x}-T\boldsymbol{x}=\boldsymbol{b},
\end{align}
which is of the same form as in the static case \eqref{eq:xTb}, but here the linear system has a block structure. As in the static case, we must pose some assumptions in order to derive the homogenised model.
\begin{ass} \label{ass:4}
    Assume that $\kappa_i(t)\equiv\kappa(t)$ is identical across all resonators. Therefore, $\lambda^i_j=c^i_j\ell\equiv c_j\ell$ and as a consequence, $g^i_{nm}\equiv g_{nm}$.
\end{ass}
We also impose the assumptions of the previous section: \Cref{ass:1} (constant volume fraction), \Cref{ass:density}  (limiting density $\Tilde{V}(x)$), and \Cref{ass:riemannsum_limit} (continuum limit approximation). This allows us to write
\begin{align} \label{def:bnm}
    \Tilde{g}_{nm}=2\mathrm{i}k^{(n)}g_{nm}=\frac{2\mathrm{i}}{N}\frac{\mu+n\xi}{v}\Lambda g_{nm}=:\frac{1}{N}\beta_{nm}\Lambda.
\end{align}
The definition of $\beta_{nm}$ was obtained in the same way as in the static case \eqref{def:beta}. Recall that $g^i_{nm}$ defined by \eqref{def:g_Acal_tdep} does not depend on $\ell$ and therefore $\beta^i_{nm}$ does not depend on $\ell$ nor $N$. Hence, 
\begin{align}\label{eq:pre_LSeq_sum}
    v^{\mathrm{in}}_{i,n}(z_i)=v^{\mathrm{in}}_n(z_i)+\frac{1}{N}\sum\limits_{j\neq i}\sum\limits_{m=-K}^K\beta_{nm}\Lambda v^{\mathrm{in}}_{i,n}(z_j) G^{k^{(n)}}(z_i-z_j).
\end{align}

First, we let 
\begin{equation} \label{def:vnm}
    V_{nm}(x):=\begin{cases} \beta_{nm}\Lambda\Tilde{V}(x), & x \in \mathcal{U},\\
    0, & x \in \mathbb{R} \setminus \mathcal{U},
    \end{cases}
\end{equation}
for $\Tilde{V}$ defined through \eqref{eq:Vtilde}. In view of the definitions \eqref{def:bnm} and \eqref{def:vnm} of $\beta_{nm}$ and $V_{nm}$, and the fact that $K$ is fixed, we can use Assumption \ref{ass:riemannsum_limit} to obtain the following result.

\begin{lemma}\label{lemma:riemannsum_conv}
    For any $f_m\in C^{0,\alpha}(\mathcal{U})$, $m=-K, \ldots, K,$ with $0<\alpha\leq1$,
    \begin{align}
        \max\limits_{1\leq i\leq N}\Bigg|\frac{1}{N}\sum\limits_{j\neq i}\sum\limits_{m=-K}^K\beta_{nm}\Lambda G^{k^{(n)}}(z_i-z_j)f_m(z_j)-&\int_{\mathcal{U}}\sum\limits_{m=-K}^KG^{k^{(n)}}(z_i-y)V_{nm}(y)f_m(y)\,\mathrm{d}y\Bigg|\nonumber\\
        &\leq\sum\limits_{m=-K}^{K}\frac{C_{nm}}{N^{\alpha/3}}||f_m||_{C^{0,\alpha}(\mathcal{U})},
    \end{align}
    for all $n=-K,\dots,K$. Here, $C_{nm}$ is independent of $N$ and $\{f_m\}_{m=-K}^{K}.$
\end{lemma}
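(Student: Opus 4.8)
The plan is to reduce this block (multi-mode) estimate to the single-mode estimate already supplied by the static theory and then sum over the finitely many modes. First I would make explicit the limiting weight appearing on the right-hand side by setting $V_{nm}(y):=\beta_{nm}\Lambda\Tilde{V}(y)$, the natural block analogue of the static weight $V=\beta\Lambda\Tilde{V}$, where $\Tilde{V}$ is the limiting density of \Cref{ass:density}, common to all modes. The structural point recorded just before the statement is that each $\beta_{nm}$ is built exactly as the static $\beta$ in \eqref{def:beta}, so that for every fixed pair $(n,m)$ the kernel $\beta_{nm}\Lambda\,G^{k^{(n)}}(z_i-z_j)$ has precisely the same scaling as the static kernel $\beta\Lambda\,G^{k}(z_i-z_j)$: the $O(\ell)$ factor carried by $\beta_{nm}\Lambda$ compensates the $O(1/\ell)$ growth of $G^{k^{(n)}}$ coming from $k^{(n)}=(\mu+n\xi)\ell/v\to0$, so the product is an $O(1)$, bounded, mode-dependent kernel.

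With this in hand, I would fix $n$ (so that $k^{(n)}$ is a fixed wavenumber) and isolate, for each fixed $m$, the single Riemann sum
\begin{align*}
    \frac{1}{N}\sum_{j\neq i}\beta_{nm}\Lambda\,G^{k^{(n)}}(z_i-z_j)f_m(z_j).
\end{align*}
This is exactly of the form covered by \Cref{ass:riemannsum_limit} (justified via \cite[Lemma 4.1]{Hai_Habib}), now with $\beta_{nm}\Lambda$, $G^{k^{(n)}}$, $V_{nm}$ and $f_m$ playing the roles of $\beta\Lambda$, $G^{k}$, $V$ and $f$. Since the point configuration $\{z_j\}$ and the limiting density $\Tilde{V}$ are shared across all modes, the estimate applies verbatim mode by mode and furnishes an $N$-independent constant $C_{nm}$ with
\begin{align*}
    \max_{1\leq i\leq N}\left|\frac{1}{N}\sum_{j\neq i}\beta_{nm}\Lambda\,G^{k^{(n)}}(z_i-z_j)f_m(z_j)-\int_{\mathcal{U}}G^{k^{(n)}}(z_i-y)V_{nm}(y)f_m(y)\,\mathrm{d}y\right|\leq\frac{C_{nm}}{N^{\alpha/3}}\|f_m\|_{C^{0,\alpha}(\mathcal{U})}.
\end{align*}

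To conclude, I would write the block difference in \Cref{lemma:riemannsum_conv} as the sum over $m=-K,\dots,K$ of these single-mode differences, move the absolute value inside by the triangle inequality, and use $\max_i\sum_m(\cdot)\leq\sum_m\max_i(\cdot)$ to bound the maximum over $i$ by the sum over $m$ of the per-mode maxima. Inserting the per-mode estimates then yields precisely $\sum_{m=-K}^{K}C_{nm}N^{-\alpha/3}\|f_m\|_{C^{0,\alpha}(\mathcal{U})}$, and since there are only $2K+1$ modes this is a finite bound holding for every $n=-K,\dots,K$.

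The routine part is the triangle-inequality bookkeeping over the finite $m$-sum; the one point that genuinely requires care is the uniformity in $N$ of the per-mode constants $C_{nm}$. This is where I would be most careful: I must verify that, although $G^{k^{(n)}}$ blows up and $\beta_{nm}\Lambda$ vanishes as $\ell\to0$, their product is a bounded kernel in the same Hölder class as the static problem, so that the single-mode estimate applies with a constant independent of $N$ (and only mildly dependent on $n$ and $m$). Once this scaling cancellation is confirmed, finiteness of $K$ means the multi-mode statement follows with no further analytic input.
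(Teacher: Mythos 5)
Your proposal is correct and follows essentially the same route as the paper: the paper's proof likewise defines $V_{nm}(x):=\beta_{nm}\Lambda\Tilde{V}(x)$ and then invokes \Cref{ass:riemannsum_limit} mode by mode, the summation over the finitely many indices $m$ via the triangle inequality being left implicit. Your additional remarks on the scaling cancellation between $\beta_{nm}\Lambda$ and $G^{k^{(n)}}$ simply make explicit what the paper takes for granted when reusing the static assumption.
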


Next, define the operators $\mathcal{T}^{nm}:C^{0,\alpha}(\mathcal{U})\to C^{0,\alpha}(\mathcal{U})$ by
\begin{align}
    \mathcal{T}^{nm}f(x)=\int_{\mathcal{U}}G^{k^{(n)}}(x-y)V_{nm}(y)f(y)\,\mathrm{d}y,\quad\forall\,m,n=-K,\dots,K.
\end{align}
By substituting the above defined integral operator $\mathcal{T}^{nm}$ into Lemma \ref{lemma:riemannsum_conv}, applying it to \eqref{eq:pre_LSeq_sum}, and neglecting the remainder term, we obtain the following system of coupled Lippmann-Schwinger equations:
\begin{align}\label{eq:system_LippmannSchwinger}
    \psi_n-\sum_{m=-K}^K\mathcal{T}^{nm}\psi_m=\psi^{\mathrm{in}},\quad\forall\,n=-K,\dots,K.
\end{align}
Then, applying the operator $\left(\frac{\mathrm{d}^2}{\mathrm{d}x^2}+\left(k^{(n)}\right)^2\right)$ leads to the coupled system of differential equations
\begin{equation}\label{eq:homog_eq_tdep}
	\left(\frac{\mathrm{d}^2}{\mathrm{d}x^2}+\left(k^{(n)}\right)^2\right)\psi_n(x) -  \sum\limits_{m=-K}^{K}V_{nm}(x)\psi_m(x)=0, \quad x \in \mathbb{R}, 
\end{equation}
which describes the effective properties of our system of time-modulated subwavelength resonators.

\begin{rem}\label{rem:totalfield_limit}
We first emphasise that the term $V_{nm}(x)$, which incorporates the homogenised effect of the time-modulated resonators, vanishes outside of $\mathcal{U}$. Moreover, in \eqref{eq:homog_eq_tdep}, both $\psi_n$ and its derivative are continuous across the boundary of  $\mathcal{U}$. Note also that ${\psi}_n(z_i)$ is the continuum limit of ${v}^{\mathrm{in}}_{i,n}$ as $N\to\infty$, which is proved in detail for the higher-dimensional static case in \cite{Hai_Habib}. Looking at \eqref{eq:vn} and \eqref{eq:vinzi_in_tdep}, the total field is given by $v^{\mathrm{in}}_{i,n}+v^{\mathrm{sc}}_{i,n}=v_n$ for all $i=1,\dots,N$. However, as $N\to\infty$, the scattered field of a single resonator $v_{i,n}^{\mathrm{sc}}$ becomes negligible. Thus, ${\psi}_n$ is the continuum limit of the total wave field $v_n$.
\end{rem}
\begin{cor}\label{cor:diffint}
   The system of homogenised equations \eqref{eq:homog_eq_tdep} can be rewritten into a single vector-valued equation given by
   \begin{align}\label{eq:hom_main}
       \left(\frac{\mathrm{d}^2}{\mathrm{d}x^2}+K^2\right)\boldsymbol{\psi}(x)-\mathcal{V}(x)\boldsymbol{\psi}(x)=\boldsymbol{0},
   \end{align}
   for {$x \in \mathbb{R}$}, where the vector $\boldsymbol{\psi}:=\left(\psi_n\right)_{n=-K}^K$ and the matrices $K$ and $\mathcal{V}$ are given by $K := \mathrm{diag}\left(k^{(n)}\right)_{n=-K}^K$ and $\mathcal{V}(x):=\left(V_{nm}(x)\right)_{m,n=-K}^K$.
\end{cor}

\begin{cor}\label{cor:LS}
    Assume the material parameters to be scaled as in Lemma \ref{lemma:S_characterisation_l0_tdep}. The system of coupled equations \eqref{eq:system_LippmannSchwinger} can then be rewritten into a single vector Lippmann-Schwinger equation given by
    \begin{align}\label{eq:vector_LippmannSchwinger}
        \boldsymbol{\psi}(x)=\boldsymbol{v}^{\mathrm{in}}(x)+\int_{\mathcal{U}}\mathcal{G}(x-y)\mathcal{V}(y)\boldsymbol{\psi}(y)\,\mathrm{d}y,\quad x \in \mathbb{R},
    \end{align}
    where 
    \begin{align}
        \mathcal{V}(x):=\left({V}_{nm}(x)\right)_{n,m=-K}^K,\quad\mathcal{G}(x):=\mathrm{diag}\left(G^{k^{(n)}}(x)\right)_{n=-K}^K.
    \end{align}
\end{cor}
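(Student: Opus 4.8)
The plan is to observe that \Cref{cor:LS} is a purely notational recasting of the scalar system \eqref{eq:system_LippmannSchwinger}: the single vector equation \eqref{eq:vector_LippmannSchwinger} is nothing but the $2K+1$ scalar Lippmann-Schwinger equations, one for each mode index $n\in\{-K,\dots,K\}$, packaged into one identity between $(2K+1)$-vectors. I would therefore argue componentwise and check that the $n$-th row of \eqref{eq:vector_LippmannSchwinger} reproduces exactly the $n$-th equation of the system. To set this up, I first stack the unknowns as $\boldsymbol{\psi}=\left(\psi_n\right)_{n=-K}^K$ and read $\mathbf{v}^{\mathrm{in}}=\left(v_n^{\mathrm{in}}\right)_{n=-K}^K$ as the vector whose $n$-th entry is the incident field of the $n$-th mode appearing on the right-hand side of \eqref{eq:system_LippmannSchwinger}.

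The key step is to recognise the scalar operators $\mathcal{T}^{nm}$ as the blocks of a single matrix-kernel integral operator. By the very definition of $\mathcal{V}_{nm}$ in the statement, for every pair $m,n$ and every $x\in\mathcal{U}$ one has
\begin{align*}
\mathcal{T}^{nm}\psi_m(x)=\int_{\mathcal{U}}G^{k^{(n)}}(x-y)V_{nm}(y)\psi_m(y)\,\mathrm{d}y=\int_{\mathcal{U}}\mathcal{V}_{nm}(x,y)\psi_m(y)\,\mathrm{d}y,
\end{align*}
so that summing over the column index $m$ yields
\begin{align*}
\sum_{m=-K}^K\mathcal{T}^{nm}\psi_m(x)=\int_{\mathcal{U}}\sum_{m=-K}^K\mathcal{V}_{nm}(x,y)\psi_m(y)\,\mathrm{d}y=\left[\int_{\mathcal{U}}\mathcal{V}(x,y)\boldsymbol{\psi}(y)\,\mathrm{d}y\right]_n,
\end{align*}
i.e. exactly the $n$-th component of the matrix-vector product under the integral sign, where $\mathcal{V}=\left(\mathcal{V}_{nm}\right)_{n,m=-K}^K$ is the block kernel from the statement.

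With this identification each scalar equation of \eqref{eq:system_LippmannSchwinger} becomes the $n$-th row of \eqref{eq:vector_LippmannSchwinger}; letting $n$ range over $-K,\dots,K$ then assembles the full vector equation. Since the argument is a rearrangement of finitely many identities that already hold pointwise in $x\in\mathcal{U}$, no analytic estimate is needed and there is no substantive obstacle. The only points that genuinely demand care are bookkeeping ones: ensuring that in the block kernel the row index $n$ carries the Green's function $G^{k^{(n)}}$ (with wave number of the \emph{target} mode) while the column index $m$ carries the coupling weight $V_{nm}$, correctly identifying the scalar source $\psi^{\mathrm{in}}$ of \eqref{eq:system_LippmannSchwinger} with the $n$-th entry of $\mathbf{v}^{\mathrm{in}}$, and tracking the sign when the coupling term is transferred across the equality so that the coupling operator appears with the orientation prescribed in \eqref{eq:vector_LippmannSchwinger}.
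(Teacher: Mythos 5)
Your componentwise unpacking is exactly the intended argument --- the paper states \Cref{cor:LS} without any proof, treating it as the same notational repackaging you describe --- but the one step you defer to ``tracking the sign when the coupling term is transferred across the equality'' is precisely where the argument, as you state it, does not close. Equation \eqref{eq:system_LippmannSchwinger} as printed reads
\begin{align*}
\psi_n+\sum_{m=-K}^K\mathcal{T}^{nm}\psi_m=\psi^{\mathrm{in}},
\end{align*}
so moving the coupling term to the other side gives
\begin{align*}
\psi_n=\psi^{\mathrm{in}}-\sum_{m=-K}^K\mathcal{T}^{nm}\psi_m,
\end{align*}
whereas the $n$-th row of \eqref{eq:vector_LippmannSchwinger}, with the kernel $\mathcal{V}_{nm}(x,y)=G^{k^{(n)}}(x-y)V_{nm}(y)$ carrying no minus sign, is $\psi_n=v_n^{\mathrm{in}}+\sum_{m}\mathcal{T}^{nm}\psi_m$. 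These differ by the sign of the coupling operator, so your claim that ``each scalar equation of \eqref{eq:system_LippmannSchwinger} becomes the $n$-th row of \eqref{eq:vector_LippmannSchwinger}'' is literally false for the equations as printed; a reader checking your proof against the displayed formulas will hit a contradiction rather than an identity.

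The discrepancy is resolvable, and a complete proof must say how. The discrete system from which \eqref{eq:system_LippmannSchwinger} is derived, namely \eqref{eq:pre_LSeq_sum} (equivalently \eqref{eq:xTb_tdep}, $\mathbf{x}-T\mathbf{x}=\mathbf{b}$), has the structure $x=b+Tx$, and the static analogue in the paper is likewise $\psi-\mathcal{T}\psi=v^i$. Hence the plus sign in \eqref{eq:system_LippmannSchwinger} is a typo for a minus: the system should read $\psi_n-\sum_{m=-K}^K\mathcal{T}^{nm}\psi_m=\psi_n^{\mathrm{in}}$. With that correction your identification of $\mathcal{T}^{nm}$ with the blocks of the kernel $\mathcal{V}$, and the assembly of the $2K+1$ scalar identities into the single vector equation, goes through verbatim and yields \eqref{eq:vector_LippmannSchwinger} with the orientation prescribed in the corollary. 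So: right approach, but you should either flag the sign inconsistency in \eqref{eq:system_LippmannSchwinger} explicitly or build the vector equation directly from \eqref{eq:pre_LSeq_sum}, rather than asserting a match of signs that does not hold on the page.
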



\Cref{cor:diffint} and \Cref{cor:LS} are the main results of our homogenisation theory for systems of time-modulated subwavelength resonators. They provide two equivalent formulations of the effective field $\boldsymbol{\psi}$. Note that the homogenisation model described by \eqref{eq:hom_main} holds under Assumptions \eqref{ass:1}, \eqref{ass:density}, \eqref{ass:riemannsum_limit} and \eqref{ass:4}, and is valid in the regime where the operating frequency is of order $\omega =O(\delta)$. 

 Comparing the homogenised equation \eqref{eq:hom_main} to the one in the static case in \eqref{eq:homogeq_static}, we observe that the effective model in the time-modulated case is described by a system of coupled differential equations, rather than by a single differential equation as in the static case. This is due to the fact that time-modulations give rise to a family of coupled harmonics at frequencies 
$(\omega + n \Omega)/v_\mathrm{r}$. Note also that in both effective models, the scattering coefficients $g$ and $g_{nm}$ depend on the operating frequency $\omega$. Moreover, 
as in Remark \eqref{rem:well-posed}, positivity of the matrix induces a dissipation effect on components of the effective field $\boldsymbol{\psi}$.

\begin{rem}

    To justify the asymptotic regime used in this section, we recall the formulas for the two subwavelength resonant frequencies $\omega_0, \omega_1$ of a single resonator in one dimension proved in \cite{ammari2024scattering}:\begin{align}\label{eq:om1_formula}
        \omega_0=0,\quad \omega_1\approx -\frac{2\mathrm{i}(v_{\mathrm{r}})^2\delta}{\ell v_0 T}\int_0^T {\kappa(t)} \,\mathrm{d}t + O\left(\frac{\delta^{3/2}}{\ell}\right),
    \end{align}
    as $\delta \to 0$. In particular, we note that $\Re \left(\omega_1\right) = O(\delta^{3/2}/\ell)$. Choosing $\delta =O(\ell)$ yields $\Re\left(\omega_1\right) = O(\delta^{1/2})$. Now, if we consider an operating frequency $\omega$  of order $\Re\left(\omega_1\right)$, then the summand of \eqref{eq:pre_LSeq_sum} is of order one as $\delta \to0$. As $N\to \infty$, this sum cannot converge and hence it is not possible to obtain an effective medium theory. Therefore, to obtain an effective medium theory, we must work in a regime for which the scattering coefficient $g$ of a single resonator goes to zero as $\delta \to 0$. To achieve this, one simple choice is to take $\omega =O(\delta).$

\end{rem}

\begin{rem}
As in Remark \ref{rem:well-posed}, it is also worth emphasising that the homogenised model \eqref{eq:hom_main} may not have a unique solution.  Nevertheless, by exactly the same arguments as those in \cite{Hai_Habib}, we can prove that if the homogenised model is well-posed, then we have $L^2$-convergence of the solution of the scattering problem with $N$ resonators to the solution of \eqref{eq:hom_main} as $N$  goes to infinity. 
\end{rem}    

\begin{rem}\label{rem:higherdim}
    Note that in higher dimensions the characterisation \eqref{eq:vn} still holds true. However, the scattered wave field is expressed in terms of the single layer potential $\mathcal{S}^k_D[\psi]$ through \cite{Hai_Habib}
    \begin{align}
        v_{i,n}^{\mathrm{sc}}(x):=\mathcal{S}^{k^{(n)}}_{D_i}[\psi_{i,n}](x),
    \end{align}
    for some surface densities $\psi_{i,n}\in L^2(\partial D)$, for all $i=1,\dots,N$ and $n\in\mathbb{Z}$. Then we can proceed in an equivalent manner as in the one-dimensional case to obtain a linear system of equations of the form \eqref{eq:xTb_tdep}. With a set of assumptions similar to those stated here, but with a higher-dimensional notation as in \cite[Section 2]{Hai_Habib}, we would arrive at a result similar to the one stated in Corollary \ref{cor:diffint}.
\end{rem}
\section{Numerical Results}\label{sec:numerical_results}
We now present a standard Nyström numerical scheme to solve the system of Lippmann-Schwinger equations \eqref{eq:vector_LippmannSchwinger}. This scheme converges at the same rate as the quadrature rule considered for the numerical integration \cite{atkinson}. In our code, we use a trapezoidal quadrature method and hence our Nyström numerical scheme is of second order. Note that, in the case where we consider random distributions of the centers of the resonators, the Monte-Carlo Nyström numerical scheme developed in \cite{florian_MonteCarlo}  can be used.

Let $\left(y_i\right)_{i=1}^M$ be a set of $M$ uniformly distributed points in $\mathcal{U}$. Discretising the Lippmann-Schwinger equation yields
\begin{align}
    \boldsymbol{\psi}(y_i)=\boldsymbol{v}^{\mathrm{in}}(y_i)+\frac{|\mathcal{U}|}{M}\sum\limits_{j=1}^M\mathcal{V}(y_i,y_j)\boldsymbol{\psi}(y_j),
\end{align}
where $|\mathcal{U}|$ is the length of the interval $\mathcal{U}$. For the numerical solution presented in this section, we set $\Tilde{V}(x):=1/|\mathcal{U}|.$ Based on this equation, we define the following:
\begin{align}
    \boldsymbol{\psi}^M:=\begin{bmatrix}
        \boldsymbol{\psi}(y_1) \\ \vdots \\ \boldsymbol{\psi}(y_M)
    \end{bmatrix},\quad \boldsymbol{v}^{\mathrm{in},M}:=\begin{bmatrix}
        \boldsymbol{v}^{\mathrm{in}}(y_1) \\ \vdots \\ \boldsymbol{v}^{\mathrm{in}}(y_M)
    \end{bmatrix},\quad \mathbb{V}:=\left(\mathcal{V}(y_i,y_j)\right)_{i,j=1}^M.
\end{align}
These definitions allow us to write
\begin{align}\label{eq:numerical_scheme}
    \left(I_{M(2K+1)}-\frac{|\mathcal{U}|}{M}\mathbb{V}\right)\boldsymbol{\psi}^M=\boldsymbol{v}^{\mathrm{in},M}.
\end{align}
As pointed out in Remark \ref{rem:totalfield_limit}, $\boldsymbol{\psi}^M$ is the total wave field in the continuum limit.
\begin{figure}[H]
    \centering
    \includegraphics[width=1.0\textwidth]{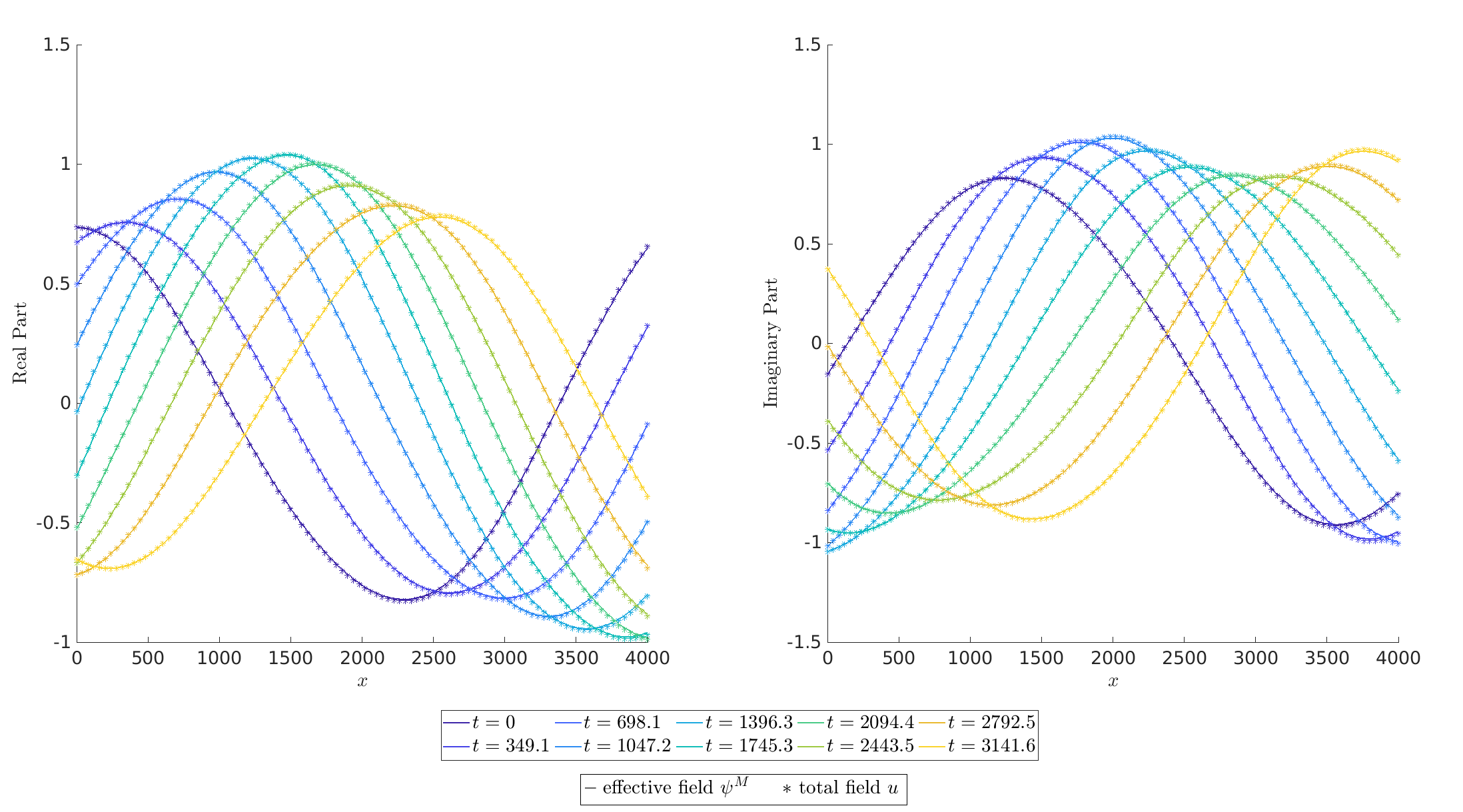}
    \caption{The effective field $\boldsymbol{\psi}^M$ (-) and the total field $u$ (*) plotted next to each other over $\mathcal{U}:=[0,4000]$ with $N=100$ resonators each of length $\ell=0.01$, thus $\Lambda=1$. We set $\gamma=0.05,\,\mu=0.11,\,\xi=0.2,\,K=4,\,\varepsilon_{\kappa}=0.4$. We evaluate the solution at ten time-steps in $[0,T]$, for $T=2\pi/\Omega=3141.59$. For the definition of \eqref{eq:numerical_scheme} we choose a mesh of size $M=200$.}\label{fig:effective_vs_true_timedep}
\end{figure}
\begin{figure}[H]
    \centering
    \includegraphics[width=0.8\textwidth]{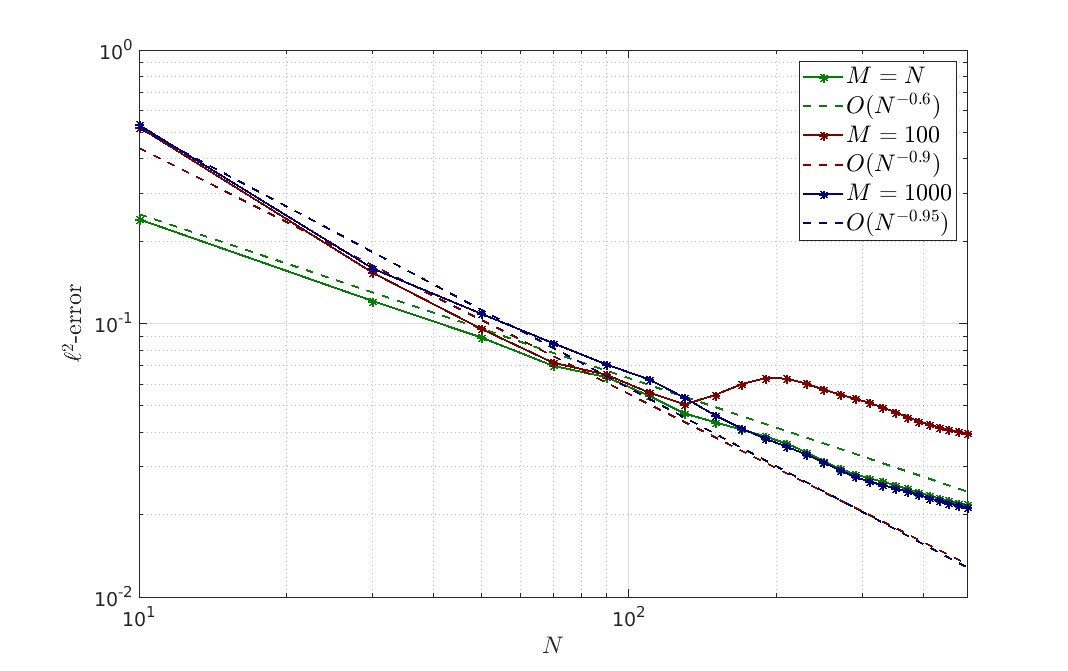}
    \caption{The $\ell^2$-norm of the error between the effective field $\boldsymbol{\psi}^M$ and the total field $u$ at time $t=0$. For the numerical solution we compare $M=N$, $M=100$ and $M=1000$. The resonators are evenly distributed inside $\mathcal{U}=[0,4000]$ and each of length $\ell=1/N$. We set $\gamma=0.05,\,\mu=0.11,\,\xi=0.2,\,K=4,\,\varepsilon_{\kappa}=0.4$.}\label{fig:error_plot}
\end{figure}

The bold line in Figure \ref{fig:effective_vs_true_timedep} shows the effective field $\boldsymbol{\psi}^M$ as a function of $x$ evaluated at different time-steps inside the interval $[0,T]$. For comparison, we plot the evaluation of the total wave field obtained using the scheme obtained in \cite{ammari2024scattering}, marked by stars inside the plot. As analytically proved, the effective field and the total field agree for $\ell\ll1$ and large $N$. The numerical results in Figure \ref{fig:effective_vs_true_timedep} show that the effective field is still quasi-periodic with quasi-periodicity $\omega$.\par
\begin{figure}[H]
    \centering
    \includegraphics[width=0.8\textwidth]{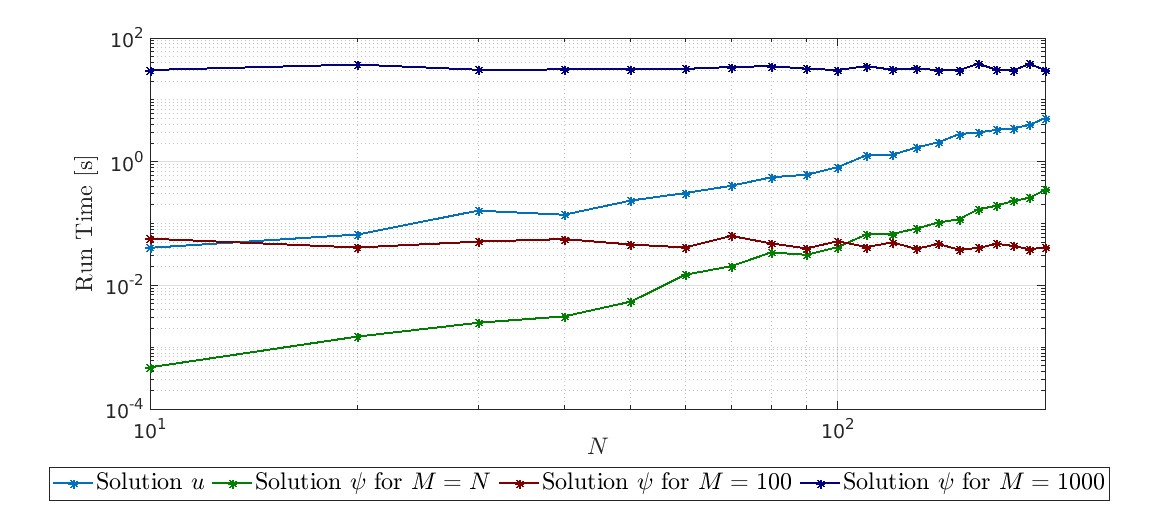}
    \caption{The run time of the computation of $\boldsymbol{\psi}^M$ solving \eqref{eq:numerical_scheme} for $M=N$, $M=100$ and $M=1000$ and $u$ solving \eqref{eq:1DL_system}. These results correspond to the same setting as in Figure \ref{fig:error_plot}.}\label{fig:runtime}
\end{figure}
Figure \ref{fig:error_plot} shows the $\ell^2$-norm of the error between the effective field $\psi$ and the true solution $u$ at each resonator $z_i$ as a function of $N$. It becomes apparent from our numerical result that the error behaves algebraically in $N$. This result is not surprising,  as the effective medium theory only holds in the limit $\ell\to0$ and $N\to\infty$. In Figure \ref{fig:error_plot}, we specifically compare the resulting $\ell^2$-error for different choices of $M$. We see that the error between the effective field and the true solution decreases with increasing $N$ and $M \leq N$. Then, it is sufficient to choose $M=N$ for $N$ large enough.\par 
The results in Figure \ref{fig:runtime} show how the computation run time of $u$ is significantly larger than the computation of the effective field $\boldsymbol{\psi}$ for $M=N$, $M=100$ and $M=1000$. This further underscores the relevance of the effective medium theory derived herein. Note that the run time of the computation of $\boldsymbol{\psi}^M$ for fixed $M (=100, 1000)$ is (almost) constant in $N$.

\section{Concluding Remarks}\label{sec:conclusion}
We have rigorously derived an effective medium theory for one-dimensional time-modulated metamaterials in the low-frequency regime. We started by providing the $2\times2$ scattering matrix for static metamaterials and this led to the point interaction approximation. This set the ground for following a similar approach as in \cite{Hai_Habib}. With suitable assumptions \ref{ass:1} - \ref{ass:riemannsum_limit} we derived the homogenised equation \eqref{eq:homogeq_static}.\par 
We then proceeded with the case of the time-dependent material parameter $\kappa$. In this paper, we assumed the parameter $\rho$ to be static, since in \cite{jinghao_liora} we proved that the resulting wave field does not depend on $\rho$ at leading order. As in the static case, we first derived the scattering matrix. However, for a time-modulated material parameter, this is a $2(2K+1)\times2(2K+1)$ matrix. This is a direct consequence of the mode coupling that arises from the modulation in time. Analogously to the static case we then derived a point interaction approximation, which ultimately led to a characterisation of the effective field through a system of coupled Lippmann-Schwinger equations \eqref{eq:system_LippmannSchwinger}. These integral equations furnished the homogenised equations given by \eqref{eq:homog_eq_tdep}. In contrast to the static case, the homogenised equations modelling a time-modulated metamaterial are a system of coupled differential equations.\par
A crucial assumption to obtain a time-modulated effective medium theory is that the incident frequency $\omega$ is slightly away from the resonant frequency of the components. For a frequency at resonance, the scattering coefficients $g$ are of order one. Even for a large number $N$ of resonators, each individual resonator makes a strong contribution to the total field, and the limit does not exist. In summary, we showed that time-dependent metamaterials at a resonant frequency cannot be treated as an effective medium.\par
Finally, we introduced a numerical scheme to solve the Lippmann-Schwinger equation for the effective field. Our numerical solution supports our analytical results by showing that the numerical solution of \eqref{eq:numerical_scheme} tends to the numerical solution of \eqref{eq:1DL_system}. In fact, Figure \ref{fig:error_plot} shows that the error is algebraic in $N$.\par
We consider the results proven in this paper to be the basis for many new breakthroughs in the exploration of time-modulated metamaterials, similarly to achievements regarding static metamaterials \cite{Hai_Habib}. Since we expect our results to generalise to higher dimensions (see Remark \ref{rem:higherdim}), the results herein proved are also of great significance to two- and three-dimensional spacetime metamaterials. 

\section*{Code availability}
The codes that were used to generate the results presented in this paper are available under \url{https://github.com/rueffl/effective_medium_theory_timedep}.

\addcontentsline{toc}{chapter}{References}
\renewcommand{\bibname}{References}
\bibliography{refs}
\bibliographystyle{plain}

\end{document}